\newcommand{\N}{\mathbb{N}}
\newcommand{\R}{\mathbb{R}}
\newcommand{\Q}{\mathbb{Q}}
\theoremstyle{plain}
\newtheorem{theorem}{Theorem}
\newtheorem{corollary}[theorem]{Corollary}
\theoremstyle{definition}
\newtheorem{definition}[theorem]{Definition}
\theoremstyle{definition}
\newtheorem{remark}[theorem]{Remark}
\renewcommand{\vec}[1]{\boldsymbol{#1}}
\newcommand{\sA}{\mathcal{A}}
\newcommand{\sM}{\mathcal{M}}
\newcommand{\sP}{\mathcal{P}}
\newcommand{\sQ}{\mathcal{Q}}
\newcommand{\sS}{\mathcal{S}}
\newcommand{\sU}{\mathcal{U}}
\newcommand{\sX}{\mathcal{X}}
\newcommand{\sY}{\mathcal{Y}}
\newcommand{\fT}{\mathfrak{T}}
\newcommand{\Cl}{\underline{C}}
\newcommand{\Cu}{\overline{C}}
\newcommand{\sPc}{\mathcal{P}_c}
\newcommand{\sQc}{\mathcal{Q}_c}
\newcommand{\pone}{p^{(1)}}
\newcommand{\ptwo}{p^{(2)}}
\newcommand{\qone}{q^{(1)}}
\newcommand{\qtwo}{q^{(2)}}
\begin{document}

\title[Non-Computability and Non-Approximability of  the FSC Capacity]{Shannon Meets Turing: Non-Computability and Non-Approximability of  the Finite State Channel Capacity\\
\small \emph{In honor of Prof. Thomas Kailath on the occasion of his 85th birthday}}

\author[H.~Boche, R.~F.~Schaefer and H.~V.~Poor]{Holger Boche, Rafael F. Schaefer, and H. Vincent Poor}

\begin{abstract}
	The capacity of finite state channels (FSCs) has been established as the limit of a sequence of multi-letter expressions only and, despite tremendous effort, a corresponding finite-letter characterization remains unknown to date. This paper analyzes the capacity of FSCs from a fundamental, algorithmic point of view by studying whether or not the corresponding achievability and converse bounds on the capacity can be computed algorithmically. For this purpose, the concept of Turing machines is used which provide the fundamental performance limits of digital computers. To this end, computable continuous functions are studied and properties of computable sequences of such functions are identified. It is shown that the capacity of FSCs is not Banach-Mazur computable which is the weakest form of computability. This implies that there is no algorithm (or Turing machine) that can compute the capacity of a given FSC. As a consequence, it is then shown that either the achievability or converse must yield a bound that is not Banach-Mazur computable. This also means that there exist FSCs for which computable lower and upper bounds can never be tight. To this end, it is further shown that the capacity of FSCs is not approximable, which is an even stricter requirement than non-computability. This implies that it is impossible to find a finite-letter entropic characterization of the capacity of general FSCs. All results hold even for finite input and output alphabets and finite state set. Finally, connections to the theory of effective analysis are discussed. Here, results are only allowed to be proved in a constructive way, while existence results, e.g., proved based on the axiom of choice, are forbidden.
\end{abstract}

\maketitle

\section{Introduction}
\label{sec:introduction}

Finite state channels (FSCs) model discrete channels with memory where the channel output depends not only on the current channel input but also on the underlying channel state. The channel state allows the channel output to implicitly depend on previous channel inputs and outputs. FSCs are of significant interest as they allow to model certain types of channel variations appearing in wireless communications including e.g. flat fading and intersymbol interference \cite{Gallager-1968-InformationTheory}. FSCs are relatively simple channels and are usually used for approximations of more complex, time-continuous channels. The theory of time-continuous channels goes back to Kailath's seminal work \cite{Kailath-1959-TechRep-SamplingLinearTimeVariantFilter}. Subsequently, communication over such time-continuous channels has been studied, for example, in \cite{Kailath-1960-TIT-CorrelationDetection,Kailath-1961-Thesis-CommunicationRandomlyVaryingChannels,Kailath-1962-TIT-MeasurementsTimeVariantChannels,Kailath-1963-TIT-TimeVariantCommunicationChannels}. But FSCs are also used in molecular communication  \cite{NakanoEckfordHaraguchi-2013-MolecularCommunication}. In the latter context, the trapdoor channel has been introduced as a simple two-state channel and is studied in \cite{Blackwell-1961-InformationTheory,AhlswedeKaspi-1987-TIT-PermutingChannels,KobayashiMorita-2002-ISIT-TrapdoorChannel,Permuter-2008-TIT-TrapdoorChannelFeedback}. This channel is also known as ``chemical channel'' due to Cover. The \emph{indecomposable finite state channel} (IFSC) is introduced in \cite{BlackwellBreimanThomasian-1958-IndecomposableFSC}.  Estimating the capacity of flat fading IFSCs is considered in \cite{GoldsmithVaraiya-1996-TIT-FiniteStateMarkovChannels}. The compound capacity of FSCs is studied in~\cite{LapidothTelatar-1998-TIT-CompoundFiniteStateChannels}.  

Determining the capacity of FSCs is extremely challenging. For example, already for the trapdoor channel, the capacity is unknown. Only a lower bound \cite{KobayashiMorita-2002-ISIT-TrapdoorChannel} and an upper bound given by the feedback capacity \cite{Permuter-2008-TIT-TrapdoorChannelFeedback} are known. Recently, a reinforcement learning approach has been presented in \cite{Permuter-2019-FSCReinforcement} to compute the feedback capacity. For general FSCs, a finite-letter characterization of the capacity in closed form is not known to date; only a general formula based on the inf-information rate has been established in \cite{VerduHan-1994-TIT-GeneralFormulaCapacity}. In this paper, we are interested in the existence of ``\emph{simple}'' capacity expressions and whether or not such capacity expressions for FSCs are algorithmically computable. Both questions are related to each other. For example, a simple capacity expression could be given a single-letter formula with entropic quantities. But it could also be a capacity function which is computable in some sense. The requirement of certain performance functions to be computable is usually implicitly assumed in information theory. Particularly, capacity expressions with entropic quantities in dependence on the communication parameters are usually assumed to be algorithmically computable. 

For the question of algorithmic computability, we use the concept of a \emph{Turing machine} \cite{Turing-1936-ComputableNumbersEntscheidungsproblem,Turing-1937-ComputableNumbersEntscheidungsproblemCorrection,Weihrauch-2000-ComputableAnalysis}, which is a mathematical model of an abstract machine that manipulates symbols on a strip of tape according to certain given rules. It can simulate any given algorithm and therewith provides a simple but very powerful model of computation. Turing machines have no limitations on computational complexity, unlimited computing capacity and storage, and execute programs completely error-free. Accordingly they provide fundamental performance limits for today’s digital computers. Turing machines account for all those problems and tasks that are algorithmically solvable on a classical (i.e., non-quantum) machine. They are further equivalent to the von Neumann-architecture without hardware limitations and the theory of recursive functions, cf. \cite{Godel-1930-VollstandigkeitAxiome,Godel-1934-UndecidablePropositions,Kleene-1952-IntroductionMetamathematics,Minsky-1961-RecursiveUnsolvability,AvigadBrattka-2014-ComputabilityAnalysis}. 

Of particular interest in this work are \emph{computable continuous functions} \cite{PourElRichards-2017-ComputabilityAnalysisPhysics} since such functions can be effectively approximated by computable polynomial sequences which is a very strong requirement on the computability. There are other forms of computability including \emph{Banach-Mazur computability}, which is the weakest from of computability. To this end, Section \ref{sec:computability} introduces the computability framework and studies further properties and insights of computable sequences of computable continuous functions and of Banach-Mazur computable functions. 

Subsequently, this paper studies FSCs which are properly introduced in Section \ref{sec:fsc}. The general question is addressed of whether or not a finite-letter characterization of the capacity exists at all and whether or not the capacity of FSCs is algorithmically computable. In Section \ref{sec:noncomp} it is shown and argued that either the achievability or converse (or both) must result in a non-computable lower or upper bound, respectively. This bound is not even Banach-Mazur computable (and therewith also not Turing computable) and, as a consequence, the capacity is not  Banach-Mazur computable as well. This also means that there exist FSCs for which computable lower and upper bounds can never be tight. Furthermore, it is shown that the capacity of FSCs is not even approximable by computable sequences of computable functions, i.e., it is impossible to approximate the capacity for certain tolerated approximation errors. Note that non-approximability is strictly stronger than non-computability. All these results hold for $|\sX|\geq2$, $|\sY|\geq2$, and $|\sS|\geq2$ and, thus, we consider the general case without restrictions on the cardinalities of the alphabets. This provides a complete picture, since for $|\sS|=1$ the capacity becomes Turing computable and is given by Shannon's single-letter formula. A similar observation with respect to the Turing computability of the capacity of FSCs has been made in \cite{ElkoussPerezGarcia-2018-Nature-Uncomputable}, where it has been shown that the capacity of FSCs is in general not Turing computable if the input and state alphabets $\sX$ and $\sS$ satisfy $|\sX|\geq10$ and $|\sS|\geq62$. This result has been used in \cite{Agarwal-2018-NonExistenceFiniteLetter} to show that for a certain class of entropic formulas, the capacity of time invariant Markov channels cannot be expressed by a finite multi-letter formula. Since this uses \cite{ElkoussPerezGarcia-2018-Nature-Uncomputable} as a ``black box input'', it further only holds for $|\sX|\geq10$ and $|\sS|\geq62$. Our proof relies on completely different techniques than those in \cite{ElkoussPerezGarcia-2018-Nature-Uncomputable} and \cite{Agarwal-2018-NonExistenceFiniteLetter} which further allows us to show that the capacity of FSCs cannot be characterized by a finite-letter entropic expression for input, output, and state alphabets that satisfy $|\sX|\geq2$, $|\sY|\geq2$, and $|\sS|\geq2$. We emphasize that these results hold even for all FSCs with finite input and output alphabets and finite state sets. When the state set is allowed to be countably infinite, the capacity of a computable channel need not be a computable real number anymore. \footnote{\emph{Notation:} $\N$, $\Q$, $\R$, and $\R_c$ are the sets of non-negative integers, rational numbers, real numbers, and computable real numbers; $\sP(\sX)$ and $\sP(\sY|\sX)$ denote the sets of (conditional) probability distributions on $\sY$ (given $\sX$); $H_2(\cdot)$ is the binary entropy function.}

\section{Computability Framework}
\label{sec:computability}

Here, we introduce the computability framework based on Turing machines which provides the needed background. Subsequently, we establish some results on computable sequences which are needed afterwards.

\subsection{Computable Real Numbers and Functions}
\label{sec:computability_numbers}

The concept of computability and computable real numbers was first introduced by Turing in \cite{Turing-1936-ComputableNumbersEntscheidungsproblem} and \cite{Turing-1937-ComputableNumbersEntscheidungsproblemCorrection}. Computable numbers are real numbers that are computable by Turing machines. Since the set of all Turing machines is a countable set, the set of computable real numbers is countable as well. See also the introductory textbook \cite{Weihrauch-2000-ComputableAnalysis} for further details.

A sequence of rational numbers $\{r_n\}_{n\in\N}$ is called a \emph{computable sequence} if there exist recursive functions $a,b,s:\N\rightarrow\N$ with $b(n)\neq0$ for all $n\in\N$ and
\begin{equation}
	\label{eq:computability_comp1}
	r_n= (-1)^{s(n)}\frac{a(n)}{b(n)}, \qquad n\in\N,
\end{equation}
cf.  \cite[Def. 2.1 and 2.2]{Soare-1987-RecursivelyEnumerableSetsDegrees} for a detailed treatment. A real number $x$ is said to be computable if there exists a computable sequence of rational numbers $\{r_n\}_{n\in\N}$ such that
\begin{equation}
	\label{eq:computability_comp2}
	|x-r_n|<2^{-n}
\end{equation}
for all $n\in\N$. This means that the computable real number $x$ is completely characterized by the recursive functions $a,b,s:\N\rightarrow\N$. It has the representation $(a,b,s)$ which we also write as $x\sim (a,b,s)$. It is clear that this representation must not be unique and that there might be other recursive functions $a',b',s':\N\rightarrow\N$ which characterize $x$, i.e., $x\sim (a',b',s')$.

We denote the set of computable real numbers by $\R_c$. Based on this, we define the set of computable probability distributions $\sP_c(\sX)$ as the set of all probability distributions $P_X\in\sP(\sX)$ such that $P_X(x)\in\R_c$ for every $x\in\sX$. The set of all computable conditional probability distributions $\sP_c(\sY|\sX)$ is defined accordingly, i.e., for $P_{Y|X}:\sX\rightarrow\sP(\sY)$ we have $P_{Y|X}(\cdot|x)\in\sP_c(\sY)$ for every $x\in\sX$. This is important since a Turing machine can only operate on computable real numbers.

\begin{definition}
	\label{def:borel}
	A function $f:\R_c\rightarrow\R_c$ is called \emph{Borel computable} if there is an algorithm (or Turing machine) that transforms each given representation $(a,b,s)$ of a computable real number $x$ into a corresponding representation for the computable real number $f(x)$.
\end{definition}

\begin{remark}
	\label{rem:comp}
	From a practical point of view, this can be seen as a minimal requirement for the algorithmic computation of the capacity of a communication system. For this task, an algorithm is needed that takes the communication parameters as inputs to compute the capacity value with a certain precision (e.g. decimal points). In information theory, even for simple problems and questions it cannot be expected that a performance quantity can be exactly numerically computed. For example, for an alphabet $\sX$ of dimension $|\sX|=2$, the entropy $H_2(p)$ of an arbitrary rational probability distribution $p\in\sP(\sX)$ with $p\neq(\frac{1}{2},\frac{1}{2})$ is a transcendental number. 
\end{remark}

Note that Turing's definition of computability conforms to the definition of Borel computability above. In this paper, we will first consider the notion of a \emph{computable continuous function}, cf. for example \cite[Def.~A]{PourElRichards-2017-ComputabilityAnalysisPhysics}. For this, let $\mathbb{I}_c$ denote a computable interval, i.e., $\mathbb{I}_c=[a,b]$ with $a,b\in\R_c$.

\begin{definition}[\cite{PourElRichards-2017-ComputabilityAnalysisPhysics}]
	\label{def:compcont}
	Let $\mathbb{I}_c\subset\R_c$ be a computable interval. A function $f:\mathbb{I}_c\rightarrow\R$ is called  \emph{computable continuous} if:
	\begin{enumerate}
		\item $f$ is \emph{sequentially computable}, i.e., $f$ maps every computable sequence $\{x_n\}_{n\in\N}$ of points $x_n\in\mathbb{I}_c$ into a computable sequence $\{f(x_n)\}_{n\in\N}$ of real numbers,
		\item $f$ is \emph{effectively uniformly continuous}, i.e., there is a recursive function $d:\N\rightarrow\N$ such that for all $x,y\in\mathbb{I}_c$ and all $N\in\N$ with
		\begin{equation*}
			\|x-y\|\leq\frac{1}{d(N)}
		\end{equation*}	
		it holds that
		\begin{equation*}
			|f(x)-f(y)|\leq\frac{1}{2^N}.
		\end{equation*}
	\end{enumerate}
\end{definition}

Computable continuous functions are functions which can be effectively approximated by computable sequence of polynomials $\{P_n\}_{n\in\N}$. Here, every polynomial $P_n$ itself is computable, i.e., its order and coefficients can algorithmically be computed, cf. \cite{PourElRichards-2017-ComputabilityAnalysisPhysics}. Note that the coefficients of these polynomials are usually rational numbers.

There are other forms of computability including \emph{Banach-Mazur computability}, which is the weakest form of computability. In particular, Borel computability and computable continuous functions imply Banach-Mazur computability, but not vice versa. For an overview of the logical relations between different notions of computability we again refer to \cite{AvigadBrattka-2014-ComputabilityAnalysis} and the introductory textbook \cite{Weihrauch-2000-ComputableAnalysis}.

\begin{definition}
	\label{def:banachmazur}
	A function $f:\R_c\rightarrow\R_c$ is called \emph{Banach-Mazur computable} if $f$ maps any given computable sequence $\{x_n\}_{n\in\N}$ of computable real numbers into a computable sequence $\{f(x_n)\}_{n\in\N}$ of computable real numbers.
\end{definition}

If we compare the different notions of computability, we immediate see that any computable continuous function is also Banach-Mazur computable, since Definition \ref{def:banachmazur} is the same as the first condition in Definition \ref{def:compcont}. However, there are infinitely many examples of Banach-Mazur computable functions that are not computable continuous, cf. for example \cite{AvigadBrattka-2014-ComputabilityAnalysis} for a detailed discussion. Such functions do not satisfy the second condition in Definition \ref{def:compcont} and, accordingly, it is not possible to compute the local variations of these functions.

We further need the concepts of a recursive set and a recursively enumerable set as defined e.g. in \cite{Soare-1987-RecursivelyEnumerableSetsDegrees}.

\begin{definition}
	\label{def:recursive}
	A set $\sA\subset\N$ is called \emph{recursive} if there exists a computable function $f$ such that $f(x)=1$ if $x\in\sA$ and $f(x)=0$ if $x\notin\sA$. 
\end{definition}

\begin{definition}
	\label{def:recursiveenumerable}
	A set $\sA\subset\N$ is \emph{recursively enumerable} if there exists a recursive function whose domain is exactly $\sA$.
\end{definition}

We have the following properties; cf. for example \cite{Soare-1987-RecursivelyEnumerableSetsDegrees}
\begin{itemize}
	\item $\sA$ is recursive is equivalent to: $\sA$ is recursively enumerable and $\sA^c$ is recursively enumerable.
	\item There exist recursively enumerable sets $\sA\subset\N$ that are not recursive, i.e., $\sA^c$ is not recursively enumerable. This means there are no computable, i.e., recursive, functions $f:\N\rightarrow\sA^c$ with $[f(\N)]=\sA^c$.
\end{itemize}

\subsection{Computable Sequences of Numbers and Functions}
\label{sec:computability_sequences}

In the following we establish some properties of computable sequences which will be needed subsequently.

\begin{theorem}
	\label{the:theorem1}
	Let $\{x_n^{(1)}\}_{n\in\N}$ and $\{x_n^{(2)}\}_{n\in\N}$ be two computable sequences of computable real numbers with
	\begin{align*}
		x_n^{(1)} \leq x_{n+1}^{(1)}  \quad\text{and}\quad 	x_n^{(2)} \geq x_{n+1}^{(2)}, \quad n\in\N,
	\end{align*}
	and
	\begin{equation*}
		\lim_{n\rightarrow\infty}x_n^{(1)} = \lim_{n\rightarrow\infty}x_n^{(2)} \eqqcolon x_*.
	\end{equation*}
	Then $x_*$ is a computable real number, i.e., $x_*\in\R_c$.
\end{theorem}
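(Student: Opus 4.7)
The plan is to exploit the squeeze $x_n^{(1)} \le x_* \le x_n^{(2)}$ (which holds for every $n$ by monotonicity together with the two sequences sharing the common limit $x_*$) to build a Turing machine that, on input $N$, produces a rational $r_N$ with $|x_*-r_N|<2^{-N}$. The crucial feature we need to justify is that, even though we have no a priori effective modulus of convergence for either sequence, the fact that their difference tends to $0$ combined with the ability to compute rational approximations of each $x_n^{(i)}$ to any prescribed precision is enough to \emph{search} for indices where the gap is certifiably small.

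Concretely, I would proceed as follows. First, record that because $\{x_n^{(1)}\}$ and $\{x_n^{(2)}\}$ are computable sequences of computable reals, there is a Turing machine that, on input $(n,k)$, outputs a rational $r_n^{(i)}(k)$ with $|r_n^{(i)}(k)-x_n^{(i)}|<2^{-k}$ for $i=1,2$. Second, describe the algorithm: on input $N$, compute $r_n^{(1)}:=r_n^{(1)}(N+3)$ and $r_n^{(2)}:=r_n^{(2)}(N+3)$ for $n=0,1,2,\dots$ in turn, and halt at the first $n$ for which
\begin{equation*}
r_n^{(2)}-r_n^{(1)}\le 2^{-(N+1)};
\end{equation*}
then output $r_N:=r_n^{(1)}$.

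Third, verify termination and correctness. Termination follows because the actual gap $d_n:=x_n^{(2)}-x_n^{(1)}$ is non-increasing and tends to $0$, so eventually $d_n<2^{-(N+2)}$, and then $r_n^{(2)}-r_n^{(1)}\le d_n+2\cdot 2^{-(N+3)}<2^{-(N+1)}$. Correctness uses the squeeze $x_n^{(1)}\le x_*\le x_n^{(2)}$ and the triangle inequality:
\begin{equation*}
|x_*-r_n^{(1)}|\le (x_n^{(2)}-x_n^{(1)})+2^{-(N+3)}\le (r_n^{(2)}-r_n^{(1)})+2\cdot 2^{-(N+3)}+2^{-(N+3)}\le 2^{-(N+1)}+2^{-(N+2)}+2^{-(N+3)}<2^{-N}.
\end{equation*}
The map $N\mapsto r_N$ is recursive since the search is effective and the underlying approximations $r_n^{(i)}(k)$ are produced by a Turing machine; hence $\{r_N\}_{N\in\N}$ is a computable sequence of rationals witnessing $x_*\in\R_c$ in the sense of \eqref{eq:computability_comp1}--\eqref{eq:computability_comp2}.

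The only subtle point, and the step I expect to be the main obstacle to spell out carefully, is justifying why the naive unbounded search for a suitable $n$ is legitimate: one must invoke that the sequences are \emph{uniformly} computable (i.e. jointly in $n$ and in the precision parameter), not merely computable termwise. Once that uniformity is in place the argument is routine; the heart of the theorem is that monotone sandwiching provides an effectively verifiable stopping criterion that plays the role of the missing modulus of convergence.
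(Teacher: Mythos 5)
Your proof is correct. The sandwich $x_n^{(1)}\le x_*\le x_n^{(2)}$, the termination argument from $d_n\to 0$, and the error accounting ($2^{-(N+1)}+2^{-(N+2)}+2^{-(N+3)}<2^{-N}$) all check out, and the uniformity issue you flag at the end is exactly resolved by the standard definition of a computable sequence of computable reals (a single machine producing $r_n^{(i)}(k)$ with $|r_n^{(i)}(k)-x_n^{(i)}|<2^{-k}$ uniformly in $n$ and $k$), which is also what the paper assumes. Where you differ from the paper is in the architecture: the paper does not run the search itself but instead \emph{reduces to the rational case}. It uses the moduli $\varphi_n^{(1)}$ to pick diagonal approximations $a_n^{(1)}$ with $x_n^{(1)}>a_n^{(1)}-2^{-n}$, forms the monotone rational sequence $c_n=\max_{1\le i\le n}\bigl[a_i^{(1)}-2^{-i}\bigr]$ (and a decreasing counterpart $d_n$ from the upper sequence), shows both converge to $x_*$, and then invokes the Pour-El--Richards result for monotone computable sequences of \emph{rationals} as a black box. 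Your argument instead proves that cited lemma from scratch: the unbounded search with the certifiably small gap as a stopping criterion is precisely the mechanism hidden inside the citation. The trade-off is that your version is self-contained and makes the effective content explicit (the gap test substitutes for the missing modulus of convergence), while the paper's version is shorter on the page but leans on an external reference; your approach also avoids having to manufacture the auxiliary monotone rational sequences, since you fold the $2^{-(N+3)}$ approximation error directly into the halting test.
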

\begin{proof}
	If $\{x_n^{(1)}\}_{n\in\N}$ and $\{x_n^{(2)}\}_{n\in\N}$ are computable sequences of rational numbers, then the result can be found in \cite{PourElRichards-2017-ComputabilityAnalysisPhysics}. The proof can be extended to computable real numbers as follows.
	
	Since $\{x_n^{(1)}\}_{n\in\N}$ is a computable sequence of computable real numbers, there is a computable sequence $\{\varphi_n^{(1)}\}_{n\in\N}$ such that for all $N\in\N$ there exists a computable double sequence $\{a_{n,m}^{(1)}\}_{n,m\in\N}$ with
	\begin{equation*}
		\Big|x_n^{(1)}-a_{n,m}^{(1)}\Big| < \frac{1}{2^N} \quad \text{for all }m\geq \varphi_n^{(1)}(N).
	\end{equation*}
	For $m_n = \varphi_n^{(1)}(n)$ we set $a_n^{(1)} = a_{n,m_n}^{(1)}$ so that $\{a_n^{(1)}\}_{n\in\N}$ is a computable sequence of rational numbers and we have
	\begin{equation*}
		x_n^{(1)} > a_n^{(1)}-\frac{1}{2^n}.
	\end{equation*}
	We set $c_n = \max_{1\leq i \leq n}[a_i^{(1)}-\frac{1}{2^i}]$ to obtain the sequence $\{c_n\}_{n\in\N}$ which is a computable sequence of rational numbers with
	\begin{equation*}
		c_n \leq c_{n+1}, \quad n\in\N,
	\end{equation*}
	and 
	\begin{equation*}
		c_n \leq x_n^{(1)} \leq x_*, \quad n\in\N.
	\end{equation*}
	Further, we have
	\begin{align*}
		\big|x_* - c_n\big| &= \big|x_* - a_n^{(1)} + a_n^{(1)} - c_n\big| \\
			&\leq \big|x_* - a_n^{(1)}\big| + \big|a_n^{(1)} - c_n\big| \\
			&= \big|x_* - x_n^{(1)} + x_n^{(1)} - a_n^{(1)}\big| + \big|a_n^{(1)} - c_n\big| \\
			&\leq \big|x_* - x_n^{(1)}\big| + \big|x_n^{(1)} - a_n^{(1)}\big| + \big|a_n^{(1)} - c_n\big| \\
			&\leq \big|x_* - x_n^{(1)}\big| + \frac{1}{2^n} + \frac{1}{2^n}
	\end{align*}
	so that
	\begin{equation*}
		\lim_{n\rightarrow\infty}\big|x_* - c_n\big| = 0,
	\end{equation*}
	i.e., the monotonically increasing computable sequence $\{c_n\}_{n\in\N}$ of rational numbers converges to $x_*$.
	
	In a similar way, based on the computable sequence $\{x_n^{(2)}\}_{n\in\N}$ we can construct a monotonically decreasing computable sequence $\{d_n\}_{n\in\N}$ of rational numbers with
	\begin{equation*}
		\lim_{n\rightarrow\infty}\big|x_* - d_n\big| = 0.
	\end{equation*}
	Now, we can apply the corresponding result from \cite{PourElRichards-2017-ComputabilityAnalysisPhysics} for computable sequences of rational numbers to conclude that $x_*$ must be a computable real number, i.e., $x_*\in\R_c$.
\end{proof}

This allows us to prove the following result.

\begin{theorem}
	\label{the:theorem2}
	Let $\{x_n\}_{n\in\N}$ be a monotonically increasing computable sequence of computable real numbers and let $x_*$ be its limit. If $x_*\in\R_c$, then there exists a recursive function $\varphi:\N\rightarrow\N$ such that for all $N\in\N$ we have for all $n\geq\varphi(N)$
	\begin{equation*}
		\big|x_*-x_n\big| < \frac{1}{2^N}.
	\end{equation*}
\end{theorem}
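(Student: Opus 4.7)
The plan is to use the computability of the limit $x_*$ as an ``anchor'' that turns the (\emph{a priori} non-effective) monotone convergence $x_n \uparrow x_*$ into an effective bound on the rate, via a rational sufficient condition that the algorithm can check by hand.

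First, since $x_* \in \R_c$, fix a representation and obtain a computable function producing rationals $r_M$ with $|x_* - r_M| < 2^{-M}$ for every $M \in \N$. Second, since $\{x_n\}_{n\in\N}$ is a computable sequence of computable reals, the same construction as in the proof of Theorem \ref{the:theorem1} yields a computable double sequence of rationals $\{a_{n,M}\}_{n,M\in\N}$ with $|x_n - a_{n,M}| < 2^{-M}$ for all $n,M$. These two primitives are all the algorithm needs.

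Next, I would define $\varphi(N)$ by an effective unbounded search. Set $M := N+2$, compute the rational $r := r_M$, and for $n = 1, 2, 3, \dots$ compute $a_n := a_{n,M}$ and test the rational inequality
\begin{equation*}
    a_n > r - 2^{-N} + 2^{-(N+1)}.
\end{equation*}
Let $\varphi(N)$ be the first $n$ for which the test succeeds. Correctness is a short interval calculation: whenever the test succeeds,
\begin{equation*}
    x_n > a_n - 2^{-M} > r - 2^{-N} + 2^{-(N+1)} - 2^{-(N+2)}
\end{equation*}
while $x_* < r + 2^{-M} = r + 2^{-(N+2)}$, so $x_* - x_n < 2^{-N}$. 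Termination follows because $x_n \uparrow x_*$ implies that eventually $x_n$ lies within, say, $2^{-(N+3)}$ of $x_*$, at which point $a_n$ is close enough to $r$ to pass the test (a straightforward triangle-inequality bookkeeping with the chosen precision $M = N+2$). Since the search terminates and all tests performed are decidable rational comparisons, $\varphi$ is recursive. Finally, monotonicity of $\{x_n\}$ ensures that once the bound $|x_* - x_n| < 2^{-N}$ holds at $n = \varphi(N)$ it holds for every $n \geq \varphi(N)$, which is what the statement demands.

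The conceptually nontrivial point, and the reason the theorem is not a triviality, is that an arbitrary computable sequence of computable reals converging to a computable limit need not converge effectively; one must exploit the extra structural hypothesis, here monotonicity, to convert ``eventually close'' into a \emph{checkable} rational condition that an algorithm can wait for. The main (mild) obstacle is to pick the precision parameters so that the test inequality is both (i) strong enough to imply $|x_* - x_n| < 2^{-N}$ and (ii) weak enough that monotone convergence forces it to trigger in finite time; the choice $M = N+2$ above achieves this balance.
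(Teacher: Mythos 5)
Your proof is correct in substance, but it follows a genuinely more self-contained route than the paper. The paper's proof does not run the search explicitly: it reuses the construction from the proof of Theorem~\ref{the:theorem1} to produce a monotonically increasing computable sequence $\{c_n\}_{n\in\N}$ of \emph{rational} numbers with $c_n\leq x_n\leq x_*$ converging to $x_*$, and then invokes the already-known rational-sequence version of the theorem (from Pour-El and Richards) as a black box to obtain the recursive modulus $\varphi$, concluding via $0\leq x_*-x_n\leq x_*-c_n$. You instead inline the content of that black box: you make the unbounded search and the decidable rational test explicit, using a rational approximation $r_{N+2}$ of the limit as the anchor. What your version buys is transparency --- the algorithm computing $\varphi(N)$ is written out, and the role of monotonicity (guaranteeing both termination of the search and persistence of the bound for $n\geq\varphi(N)$) is visible; what the paper's version buys is brevity and a clean reduction to the rational case, which is the pattern it reuses again in Theorem~\ref{the:bmsequences}.

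One detail in your termination argument deserves tightening. You claim that once $x_n$ is within $2^{-(N+3)}$ of $x_*$ the test $a_n> r-2^{-(N+1)}$ must fire; with $M=N+2$ this does not follow in the worst case where $r$ is close to $x_*+2^{-(N+2)}$, since then $a_n$ is only guaranteed to exceed $r-2^{-(N+1)}-2^{-(N+3)}$. The fix is immediate: because $|x_*-r|<2^{-(N+2)}$ is a \emph{strict} inequality, the quantity $\epsilon_0=x_*-2^{-(N+2)}-(r-2^{-(N+1)})$ is strictly positive, and monotone convergence eventually forces $x_n>x_*-\epsilon_0$, hence $a_n>x_n-2^{-(N+2)}>r-2^{-(N+1)}$. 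So termination holds, just not with the specific uniform tolerance you quoted; this is a bookkeeping slip rather than a gap.
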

\begin{proof}
	For computable sequences of rational numbers, the result can be found in \cite{PourElRichards-2017-ComputabilityAnalysisPhysics}. The proof can be extended to computable real numbers as follows. 
	
	We make use of the construction in the proof of Theorem~\ref{the:theorem1} to prove the desired result. Applying this construction to $\{x_n\}_{n\in\N}$ results in a monotonically increasing computable sequence $\{c_n\}_{n\in\N}$ of rational numbers with
	\begin{equation*}
		c_n \leq x_n \leq x_*, \quad n\in\N.
	\end{equation*}
	Since the result holds for monotonically increasing computable sequences of rational numbers, there exists a recursive function $\varphi:\N\rightarrow\N$ such that for all $N\in\N$ we have for all $n\geq\varphi(N)$
	\begin{equation*}
		0 \leq x_*-x_n \leq x_* - c_n < \frac{1}{2^n}
	\end{equation*}
	so that
	\begin{equation*}
		\big|x_*-x_n\big| < \frac{1}{2^N} \quad \text{for all }n\geq\varphi(N).
	\end{equation*}
	Thus, the computable sequence of computable real numbers converges effectively to $x_*$ proving the desired result.
\end{proof}

\begin{remark}
	\label{rem:comp1}
	Note that it is possible to find a computable sequence $\{x_n\}_{n\in\N}$ of rational numbers that converges to a computable real number $x_*\in\R_c$ (which can further be rational), i.e.,
	\begin{equation*}
		\lim_{n\rightarrow\infty}\big|x_*-x_n\big|=0,
	\end{equation*}
	but the convergence is not effective. According to the following Theorem \ref{the:theorem3}, this sequence is then not monotonically increasing or decreasing.
\end{remark}

Next, we establish similar results for computable sequences of computable continuous functions.

\begin{theorem}
	\label{the:theorem3}
	Let $F:[0,1]\rightarrow\R$ be a computable continuous function and $\{F_N\}_{N\in\N}$ be a computable sequence thereof with $F_N(x)\leq F_{N+1}(x)$, $x\in[0,1]$, and
	\begin{equation*}
		\lim_{N\rightarrow\infty}F_N(x) = F(x).
	\end{equation*}
	Then there exists a recursive function $\varphi:\N\rightarrow\N$ such that for all $M\in\N$ we have for all $N\geq\varphi(M)$
	\begin{equation*}
		\big|F(x) - F_N(x)\big| < \frac{1}{2^M}.
	\end{equation*}
\end{theorem}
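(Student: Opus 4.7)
The plan is to reduce the uniform (functional) statement to the scalar setting of Theorem~\ref{the:theorem2} by passing to the sequence of sup-norms. Set $G_N(x) := F(x)-F_N(x)$. Because the difference of computable continuous functions is again computable continuous (and this operation is effective in $N$), $\{G_N\}_{N\in\N}$ is a computable sequence of computable continuous functions on $[0,1]$. The hypotheses of the theorem give $G_N(x)\geq G_{N+1}(x)\geq 0$ for every $x\in[0,1]$, together with $\lim_{N\to\infty}G_N(x)=0$ pointwise.

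Next I would invoke the classical Dini theorem: a monotonically decreasing sequence of continuous functions on the compact set $[0,1]$ that converges pointwise to a continuous limit (here the zero function) converges uniformly. Hence $M_N := \max_{x\in[0,1]} G_N(x)$ is well defined, nonnegative, monotonically decreasing in $N$, and satisfies $\lim_{N\to\infty}M_N=0$.

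The key step is to verify that $\{M_N\}_{N\in\N}$ is a computable sequence of computable real numbers. I would use the standard algorithm from computable analysis: from the joint recursive modulus of continuity $d(N,\cdot)$ of the computable sequence $\{G_N\}$, pick for each precision $2^{-k}$ a rational grid on $[0,1]$ of spacing $1/d(N,k+1)$, compute rational approximations to the values of $G_N$ at the finitely many grid points to precision $2^{-k-1}$, and output the maximum of these rationals. Effective uniform continuity of $G_N$ bounds the error incurred by restricting to the grid by $2^{-k-1}$, and the assumption that $\{G_N\}$ is a \emph{computable sequence} of computable continuous functions ensures that the whole procedure is uniform in $N$. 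This produces a computable double sequence of rationals $\{r_{N,k}\}$ with $|M_N-r_{N,k}|<2^{-k}$, whence $\{M_N\}_{N\in\N}\subset\R_c$ is a computable sequence of computable reals.

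Finally I would apply Theorem~\ref{the:theorem2} to the monotonically increasing computable sequence $\{-M_N\}_{N\in\N}$ of computable real numbers, whose limit $0\in\R_c$ is computable. This yields a recursive function $\varphi:\N\to\N$ such that $M_N < 2^{-M}$ whenever $N\geq\varphi(M)$. Since $0\leq G_N(x)\leq M_N$ for every $x\in[0,1]$, this gives $|F(x)-F_N(x)|=G_N(x)<2^{-M}$ for all $N\geq\varphi(M)$ and all $x\in[0,1]$, which is the claim. The main technical obstacle is the effectivity of the supremum step, i.e., establishing uniformly in $N$ that $M_N$ is computable; everything else is either a reduction or an application of Theorem~\ref{the:theorem2} together with (non-effective) Dini.
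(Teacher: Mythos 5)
Your proposal is correct and follows essentially the same route as the paper's proof: both set $Q_N=F-F_N$, use the compactness/Dini argument to get uniform convergence, pass to the computable sequence of maxima $C_N=\max_{x\in[0,1]}Q_N(x)$, and conclude effective convergence via Theorem~\ref{the:theorem2}. The only difference is that you spell out the grid-based argument for why the maxima form a computable sequence of computable reals, a step the paper simply asserts.
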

\begin{proof}
	Let $Q_N(x)=F(x)-F_N(x)$, $x\in[0,1]$. We have $0 \leq Q_{N+1}(x) \leq Q_N(x)$ and $\lim_{N\rightarrow\infty}Q_N(x)=0, \quad x\in[0,1]$. Let $M\in\N$ be arbitrary. There exists an $N_0=N_0(M,x)$ with
	\begin{equation*}
		Q_N(x) < \frac{1}{2^M}\quad\text{for all }N\geq N_0(M,x).
	\end{equation*}
	We define the set
	\begin{equation*}
		\sS_{N,M} = \Big\{x\in[0,1]:Q_N(x)<\frac{1}{2^M}\Big\}
	\end{equation*}
	and observe that $\sS_{N,M} \subset\sS_{N+1,M}$. Now, $\{\sS_{N,M}\}$ is a family of open sets with $[0,1]\subset\bigcup_{N=1}^\infty\sS_{N,M}$. Since $[0,1]$ is a compact set \cite{Rudin87RealComplexAnalysis}, there exists an $N_0(M)$ with $[0,1]\subset\sS_{N_0,M}$ and therewith $Q_{N_0}(x)<\frac{1}{2^M}$ for $N_0$ and also all $N\geq N_0$. Let
	\begin{equation*}
		\max_{x\in[0,1]}Q_N(x) = C_N.
	\end{equation*}
	Since $Q_N$ is a computable continuous function, we always have $C_N\in\R_c$. Further, since $\{Q_N\}_{N\in\N}$ is a computable sequence of computable real numbers, the sequence $\{C_N\}_{N\in\N}$ is also a computable sequence of computable real numbers. For all $N\in\N$ it holds that $C_N\geq C_{N+1}$ and
	\begin{equation*}
		\lim_{N\rightarrow\infty}C_N=0.
	\end{equation*}
	Accordingly, there exists a recursive function $\varphi:\N\rightarrow\N$ such that for all $M\in\N$ we have for all $N\geq\varphi(M)$
	\begin{equation*}
		\big|F(x)-F_N(x)\big| = \big|Q_N(x)\big| < \frac{1}{2^M}
	\end{equation*}
	which proves the desired result.
\end{proof}

Some remarks are in order:
\begin{enumerate}
	\item The result extends to functions on compact spaces.
	\item The result remains true for monotonically decreasing functions.
	\item It is important that $F$ is a computable continuous function. Already for computable sequences of rational numbers with $x_n\leq x_{n+1}$ that converge to a $x_*\notin\R_c$, we do not have effective convergence, see e.g.~\cite{Specker-1949-TJSL-NichtKonstruktivBeweisbar}.
	\item A part of the proof is not effective as we required compactness which is needed to show uniform convergence. This is subsequently used to show the effective convergence of the computable continuous function~$F$.
\end{enumerate}

We can use Theorem \ref{the:theorem3} to show the following result.

\begin{corollary}
	\label{cor:corollary4}
	Let $\{F_N\}_{N\in\N}$ and $\{G_N\}_{N\in\N}$ be computable sequences of computable continuous functions on $[0,1]$ with
	\begin{equation*}
		F_N(x) \leq F_{N+1}(x) \leq G_{N+1}(x) \leq G_N(x)
	\end{equation*}
	and
	\begin{equation*}
		\lim_{N\rightarrow\infty}F_N(x) = \lim_{N\rightarrow\infty}G_N(x) \eqqcolon \Phi(x), \quad x\in[0,1].
	\end{equation*}
	Then $\Phi:[0,1]\rightarrow\R$ is also a computable continuous function and $\{F_N\}_{N\in\N}$ and $\{G_N\}_{N\in\N}$ converge effectively to $\Phi$.
\end{corollary}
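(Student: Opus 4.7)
The plan is to reduce everything to Theorem \ref{the:theorem3} via the gap sequence $H_N(x) := G_N(x)-F_N(x)$. Differences of computable continuous functions are again computable continuous, so $\{-H_N\}_{N\in\N}$ is a computable sequence of computable continuous functions on $[0,1]$ satisfying $-H_N(x)\leq -H_{N+1}(x)\leq 0$ and $\lim_{N\to\infty}(-H_N(x))=0$ for every $x\in[0,1]$, with limit the trivially computable continuous function $0$. Theorem \ref{the:theorem3} then supplies a recursive $\varphi:\N\to\N$ with $H_N(x) < 2^{-M}$ for all $x\in[0,1]$ and all $N\geq\varphi(M)$. Since $F_N(x)\leq \Phi(x)\leq G_N(x)$, the sandwich yields
\begin{equation*}
\sup_{x\in[0,1]}\bigl|\Phi(x)-F_N(x)\bigr|\leq \frac{1}{2^M} \quad\text{and}\quad \sup_{x\in[0,1]}\bigl|\Phi(x)-G_N(x)\bigr|\leq \frac{1}{2^M}
\end{equation*}
for all $N\geq\varphi(M)$, which is exactly the claimed (uniform) effective convergence of both sequences to $\Phi$.

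With $\varphi$ in hand, the two defining properties of a computable continuous function (Definition \ref{def:compcont}) follow by routine bookkeeping. For sequential computability, let $\{x_n\}_{n\in\N}$ be any computable sequence in $[0,1]$. Then $\{F_N(x_n)\}_{N,n\in\N}$ is a computable double sequence of computable real numbers, and the uniform estimate $|\Phi(x_n)-F_{\varphi(n+1)}(x_n)|\leq 2^{-(n+1)}$, combined with an effective rational $2^{-(n+1)}$-approximation of the computable real $F_{\varphi(n+1)}(x_n)$, produces a computable rational sequence $\{y_n\}$ with $|y_n-\Phi(x_n)|<2^{-n}$; hence $\{\Phi(x_n)\}_{n\in\N}$ is a computable sequence of computable real numbers. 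For effective uniform continuity, let $D:\N\times\N\to\N$ be the recursive function recording the moduli of the $F_N$ uniformly in $N$ (provided by the notion of a computable sequence of computable continuous functions), so $\|x-y\|\leq 1/D(N,K)$ forces $|F_N(x)-F_N(y)|\leq 2^{-K}$. Setting $N^*:=\varphi(M+2)$ and $d(M):=D(N^*,M+1)$, the triangle split
\begin{equation*}
|\Phi(x)-\Phi(y)| \leq |\Phi(x)-F_{N^*}(x)| + |F_{N^*}(x)-F_{N^*}(y)| + |F_{N^*}(y)-\Phi(y)|
\end{equation*}
is bounded by $2^{-(M+2)} + 2^{-(M+1)} + 2^{-(M+2)} = 2^{-M}$ whenever $\|x-y\|\leq 1/d(M)$, which furnishes the required recursive modulus for $\Phi$.

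The crux of the argument is the effective-convergence step, where Theorem \ref{the:theorem3} promotes pointwise monotone convergence to uniform effective convergence on the compact interval $[0,1]$; everything after that is a mechanical chaining of rates. I expect the only delicate point to be checking that the moduli $d_N$ of the $F_N$ are genuinely uniform in $N$, which however is built into the very notion of a computable sequence of computable continuous functions, and that the non-effective compactness step inside Theorem \ref{the:theorem3}—already flagged in the remarks following its proof—carries over here without modification.
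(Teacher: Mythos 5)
Your proposal is correct and follows essentially the same route as the paper: the paper likewise forms the gap sequence $Q_N = G_N - F_N$, observes it decreases monotonically to zero, and invokes Theorem \ref{the:theorem3} to obtain effective (uniform) convergence, from which the conclusion follows. You simply spell out the remaining bookkeeping (the sandwich estimate and the verification of the two conditions of Definition \ref{def:compcont} for $\Phi$) that the paper leaves implicit.
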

\begin{proof}
	We set
	\begin{equation*}
		Q_N(x)=G_N(x)-F_N(x), \quad x\in[0,1],
	\end{equation*}
	and $\{Q_N\}_{N\in\N}$ is a computable sequence of computable continuous functions. For $x\in[0,1]$ we have
	\begin{align*}
		Q_N(x) \geq G_{N+1}(x)-F_N(x) \geq G_{N+1}(x) - F_{N+1}(x)  = Q_{N+1}(x)
	\end{align*}
	and
	\begin{equation*}
		\lim_{N\rightarrow\infty}Q_N(x)=0,\quad x\in[0,1].
	\end{equation*}
	Now, from Theorem \ref{the:theorem3} follows that the computable sequence $\{Q_N\}_{N\in\N}$ of computable continuous functions converges effectively to zero proving the desired result.
\end{proof}

We obtain a similar result for computable sequences of Banach-Mazur computable functions.

\begin{theorem}
	\label{the:bmsequences}
	Let $\{F_N\}_{N\in\N}$ and $\{G_N\}_{N\in\N}$ be computable sequences of functions $F_N:[0,1]\cap\R_c\rightarrow\R_c$ and $G_N:[0,1]\cap\R_c\rightarrow\R_c$, $N\in\N$, with
	\begin{align*}
		F_N(x) &\leq F_{N+1}(x), \quad x\in[0,1]\cap\R_c, \\
		G_N(x) &\geq G_{N+1}(x), \quad x\in[0,1]\cap\R_c,
	\end{align*}
	and
	\begin{equation*}
		\lim_{N\rightarrow\infty}F_N(x) = \lim_{N\rightarrow\infty}G_N(x) \eqqcolon \Phi(x), \quad x\in[0,1]\cap\R_c.
	\end{equation*}
	Then $\Phi:[0,1]\cap\R_c\rightarrow\R$ is also a Banach-Mazur computable function.
\end{theorem}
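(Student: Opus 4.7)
The goal is to verify that $\Phi$ maps any computable sequence $\{x_n\}_{n\in\N}$ of computable reals in $[0,1] \cap \R_c$ to a computable sequence $\{\Phi(x_n)\}_{n\in\N}$ of computable reals. The plan is to exploit the pointwise sandwich $F_N(x_n) \leq \Phi(x_n) \leq G_N(x_n)$, which is immediate from the monotonicity of $\{F_N\}$ and $\{G_N\}$ together with their common pointwise limit being $\Phi$. Since $\{F_N\}$ and $\{G_N\}$ are \emph{computable sequences} of Banach-Mazur computable functions, feeding in the computable sequence $\{x_n\}$ produces, uniformly in $N$ and $n$, computable double sequences of computable reals $\{F_N(x_n)\}_{N,n}$ and $\{G_N(x_n)\}_{N,n}$. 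Equivalently, there is a Turing machine which, on input $(N,n,k)$, returns rationals $p_{N,n,k}$ and $q_{N,n,k}$ satisfying $|p_{N,n,k}-F_N(x_n)|<2^{-k}$ and $|q_{N,n,k}-G_N(x_n)|<2^{-k}$.

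The core construction is an effective unbounded search. Given a target precision $k \in \N$ and an index $n$, enumerate $N=1,2,3,\ldots$ and at each step compute the rationals $p_{N,n,k+3}$ and $q_{N,n,k+3}$ as above. Test the decidable rational inequality $q_{N,n,k+3}-p_{N,n,k+3} < 2^{-(k+1)}$, and at the first $N$ for which it holds, output $r_{n,k} \coloneqq p_{N,n,k+3}$ as the approximation to $\Phi(x_n)$.

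To verify correctness, note that once the test succeeds, using $F_N(x_n) \leq \Phi(x_n) \leq G_N(x_n)$ gives
\begin{equation*}
|r_{n,k} - \Phi(x_n)| \leq |p_{N,n,k+3} - F_N(x_n)| + (G_N(x_n) - F_N(x_n)) \leq 2^{-(k+3)} + (q_{N,n,k+3} - p_{N,n,k+3}) + 2 \cdot 2^{-(k+3)} < 2^{-k}.
\end{equation*}
For termination, the hypothesis $\lim_N F_N(x_n) = \lim_N G_N(x_n) = \Phi(x_n)$ implies $\lim_N (G_N(x_n)-F_N(x_n)) = 0$, so some $N$ with $G_N(x_n)-F_N(x_n) < 2^{-(k+3)}$ exists, and for any such $N$ the computed rationals differ by less than $3\cdot 2^{-(k+3)} < 2^{-(k+1)}$, triggering the test. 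The construction is uniform in $(n,k)$, so $\{\Phi(x_n)\}_{n\in\N}$ is a computable sequence of computable reals, which is exactly the definition of Banach-Mazur computability.

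The main conceptual obstacle is that, unlike Corollary \ref{cor:corollary4}, no uniform modulus of convergence is available: the hypotheses give only pointwise convergence on $[0,1]\cap\R_c$ and no compactness argument is at hand, since the domain of definition is not a compact set in any effective sense. Fortunately, Banach-Mazur computability demands only termination and correctness \emph{for each individual input} $x_n$, so an unbounded search over $N$ that provably halts for every fixed $n$ is enough. This also explains why the conclusion is merely Banach-Mazur computability and not the stronger computable continuity obtained when a compact domain is present.
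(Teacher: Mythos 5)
Your proposal is correct and follows essentially the same route as the paper's proof: both exploit the sandwich $F_N(x_n)\leq\Phi(x_n)\leq G_N(x_n)$ and run an unbounded but provably terminating search for the first $N$ at which the (rational approximations of the) upper and lower values are close enough, uniformly in $n$ and the precision parameter. The paper packages the uniformity via the s-m-n theorem applied to the functions $\varphi_n(M)=\min\{N: \overline{y}_{n,N}-\underline{y}_{n,N}<2^{-M}\}$, whereas you state the uniformity of the search directly and carry out the test on decidable rational inequalities; this is only a cosmetic difference.
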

\begin{proof}
	The function $\Phi:[0,1]\cap\R_c\rightarrow\R$ is well defined. For $x\in[0,1]\cap\R_c$, the function value $\Phi(x)$ is the limit of the monotonically increasing computable sequence $\{F_N(x)\}_{N\in\N}$ of computable real numbers as well as the limit of the monotonically decreasing computable sequence $\{G_N(x)\}_{N\in\N}$ of computable real numbers. Therefore, we have $\Phi(x)\in\R_c$. 
	
	We have to show that for every computable sequence $\{x_n\}_{n\in\N}$ of computable real numbers, the sequence $\{\Phi(x_n)\}_{n\in\N}$ is a computable sequence of computable real numbers as well. Let $y_n=\Phi(x_n)$, $n\in\N$. Similarly as in the proofs of Theorems \ref{the:theorem1} and \ref{the:theorem2}, there exist computable double sequences $\{\overline{y}_{n,N}\}_{n\in\N,N\in\N}$ and $\{\underline{y}_{n,N}\}_{n\in\N,N\in\N}$ of rational numbers with
	\begin{equation*}
		\overline{y}_{n,N}=G_N(x_n) \quad\text{and}\quad \underline{y}_{n,N}=F_N(x_n), 
	\end{equation*}
	which satisfy the following properties: For every $n\in\N$ and $N\in\N$ it holds
	\begin{equation*}
		\overline{y}_{n,N} \geq \overline{y}_{n,N+1} \quad\text{and}\quad \underline{y}_{n,N} \leq \underline{y}_{n,N+1}
	\end{equation*}
	and further
	\begin{equation*}
		\lim_{N\rightarrow\infty}\overline{y}_{n,N}=\lim_{N\rightarrow\infty}\underline{y}_{n,N}=y_n.
	\end{equation*}
	As in the proof of Theorem \ref{the:theorem1}, for $n\in\N$ let for $M\in\N$, $\varphi_n(M)$ be the smallest natural number $N$ such that
	\begin{equation*}
		0 \leq \overline{y}_{n,N} - \underline{y}_{n,N} < \frac{1}{2^M}.
	\end{equation*}
	Then, $\varphi_n$ is a recursive function and $\{\varphi_n\}_{n\in\N}$ is a computable sequence of recursive functions. From the s-m-n-Theorem \cite{Soare-1987-RecursivelyEnumerableSetsDegrees} follows that there exists also a recursive function $\varphi:\N^2\rightarrow\N$ with
	\begin{equation*}
		\varphi(n,M) = \varphi_n(M), \quad (n,M)\in\N^2.
	\end{equation*}
	As in the proof of Theorem \ref{the:theorem2} this implies that for all $n\in\N$ it holds: For all $M\in\N$ we have for all $N\geq\varphi(n,M)$
	\begin{equation*}
		\big|y_n-\underline{y}_{n,N}\big| < \frac{1}{2^M},
	\end{equation*}
	i.e., $\{y_n\}_{n\in\N}$ is a computable sequence of computable real numbers wich completes the proof.
\end{proof}

It is clear that this result also applies to computable sequences of Borel computable functions. Also in this case, the function $\Phi$ must be Banach-Mazur computable. 

In the following, we will use these results and in particular Theorem \ref{the:theorem3}, Theorem \ref{the:bmsequences}, and Corollary \ref{cor:corollary4} to study the computability of the capacity of FSCs.

\section{Finite State Channels}
\label{sec:fsc}

In this section we introduce the concept of finite state channels which are suitable to model discrete channels with memory \cite{Gallager-1968-InformationTheory,BlackwellBreimanThomasian-1958-IndecomposableFSC,Blackwell-1961-InformationTheory}.

\subsection{Basic Definitions}
\label{sec:fsc_def}

Let $\sX$, $\sY$, and $\sS$ be finite input, output, and state sets. FSCs are usually specified by its underlying probability law
\begin{equation}
	\label{eq:probabilitylaw}
	p(y_n,s_n|x_n,s_{n-1}) \in \sP(\sY\times\sS|\sX\times\sS)
\end{equation}
where $y_n\in\sY$ and $s_n\in\sS$ are the output and state of the channel at time instant $n$ whose probability depend on the input $x_n\in\sX$ at time instant $n$ and on the previous state $s_{n-1}\in\sS$ at time instant $n-1$. 

We assume that the output $Y_n$ and state $S_n$ are statistically independent given $x_n$ and $s_{n-1}$ so that \eqref{eq:probabilitylaw} can be written as
\begin{equation}
	\label{eq:probabilitylaw2}
	p(y_n,s_n|x_n,s_{n-1}) = p(y_n|x_n,s_{n-1})q(s_n|x_n,s_{n-1})
\end{equation}
for $p(y_n|x_n,s_{n-1})\in\sP\coloneqq\sP(\sY|\sX\times\sS)$ and $q(s_n|x_n,s_{n-1})\in\sQ\coloneqq\sP(\sS|\sX\times\sS)$. The corresponding sets of computable conditional probabilities are denoted by $\sP_c\coloneqq\sP_c(\sY|\sX\times\sS)$ and $\sQ_c\coloneqq\sP_c(\sS|\sX\times\sS)$, respectively. 

\begin{remark}
	Not that the assumption of independence of $Y_n$ and $S_n$ and its consequence on the probability law as shown in \eqref{eq:probabilitylaw2} will be no loss of generality. In the end, we will show that already the special class \eqref{eq:probabilitylaw2} of FSCs is not Turing computable so that this must be the case for the general class \eqref{eq:probabilitylaw} as well.
\end{remark}

In general, $p^n(y^n|x^n)$ for block length $n$ is undefined for a FSC and we have to consider the general $p^n(y^n,s_n|x^n,s_0)$ which is the probability of the output sequence $y^n$ and a final state $s_n$ at time instant $n$ given an input sequence $x^n$ and an initial state $s_0$. This can be calculated inductively from 
\begin{equation}
	\label{eq:probabilitylaw3}
	\begin{split}
		&p^n(y^n,s_n|x^n,s_0) =\sum_{s_{n-1}\in\sS}p(y_n,s_n|x_n,s_{n-1})p^{n-1}(y^{n-1},s_{n-1}|x^{n-1},s_0),
	\end{split}
\end{equation}
cf. \cite{Gallager-1968-InformationTheory}. Further, by summing over the final state we obtain
\begin{equation}
	\label{eq:probabilitylaw4}
	p^n(y^n|x^n,s_0)=\sum_{s_n\in\sS}p^n(y^n,s_n|x^n,s_0).
\end{equation}

\begin{definition}
	\label{def:code}
	An $(n,M)$-\emph{code} for an FSC consists of an encoder $f:\sM\times\sS\rightarrow\sX^n$ that maps the message $m\in\sM=\{1,...,M\}$ and the initial state $s_0\in\sS$ into the codeword $x^n\in\sX^n$, and a decoder $\varphi:\sY^n\times\sS\rightarrow\sM$ that estimates the transmitted message $\hat{m}\in\sM$ based on the received output $y^n\in\sY^n$ and the initial state $s_0\in\sS$.
\end{definition}

For the initial state $s_0\in\sS$ the average probability of error of such a code based on \eqref{eq:probabilitylaw4} is
\begin{equation*}
	\bar{e}(s_0) = \frac{1}{|\sM|}\sum_{m\in\sM}\sum_{y^n:\varphi(y^n,s_0)\neq m}p(y^n|f(m,s_0),s_0).
\end{equation*}

\begin{definition}
	\label{def:achievable}
	A rate $R>0$ is an \emph{achievable rate} for an FSC if for all $\tau>0$ there exists an $n(\tau)\in\N$ and a sequence of $(n,M)$-codes such that for all $n\geq n(\tau)$ we have $\frac{1}{n}\log M>R-\tau$ and $\bar{e}(s_0)\leq \lambda_n$ for $s_0\in\sS$ with $\lambda_n\rightarrow0$ as $n\rightarrow\infty$. The \emph{capacity} $C$ of an FSC is given by the supremum of all achievable rates $R$.
\end{definition}

The capacity $C$ of an FSC is a function of the communication parameters $p\in\sP$ and $q\in\sQ$, cf. \eqref{eq:probabilitylaw2}, as well as the initial state $s_0\in\sS$. Accordingly, we write $C=C(\{p,q,s_0\})$.

\subsection{General Capacity Formulas}
\label{sec:fsc_capacity}

We will study the computability of the capacity function $C$ in dependence on the communication parameters $\{p,q,s_0\}$. These will be the inputs to the corresponding Turing machine. For this purpose, we need a corresponding expression for $C(\{p,q,s_0\})$ as for example the general formula provided by Verd\'u and Han in \cite{VerduHan-1994-TIT-GeneralFormulaCapacity}. For the FSC as defined above, the capacity can be expressed in a multi-letter form as
\begin{equation}
	\label{eq:hanverdu}
	C(\{p,q,s_0\}) = \lim_{n\rightarrow\infty}\sup_{X^n}\frac{1}{n}I(X^n;Y^n|s_0)
\end{equation}
according to the underlying probability law \eqref{eq:probabilitylaw3}-\eqref{eq:probabilitylaw4}. This has been shown to be valid for information stable channels \cite{Dobrushin-1963-AMS-GeneralFormulation}, but does not hold in full generality. Moreover, this expression cannot be computed immediately as it is the limit of a sequence of optimization problems. Furthermore, it is not even clear if $C(\{p,q,s_0\})$ is a computable real number for computable $p$ and $q$. Another formula for the capacity based on the inf-information rate has been established in \cite{VerduHan-1994-TIT-GeneralFormulaCapacity} 
\begin{equation}
	\label{eq:hanverdu2}
	C(\{p,q,s_0\}) = \sup_{\vec{X}}\underline{\vec{I}}(\vec{X};\vec{Y})
\end{equation}
where $\underline{\vec{I}}(\vec{X};\vec{Y})$ is the inf-information rate as defined in \cite{HanVerdu-1993-TIT-ApproximationOutputStatistics}. In general, this expression cannot be evaluated easily. 

For the special class of so-called indecomposable channels, there exists a simple capacity expression for $C(\{p,q,s_0\})$. This is discussed next.

\subsection{Indecomposable Channels}
\label{sec:fsc_indecomp}

The class of IFSCs goes back to \cite{BlackwellBreimanThomasian-1958-IndecomposableFSC} and refers to those FSCs for which the effect of the initial state vanishes with time. For the precise definition of this, we follow \cite[Sec. 4]{Gallager-1968-InformationTheory} and set $q^n(s_n|x^n,s_0) = \sum_{y^n\in\sY^n}p^n(y^n,s_n|x^n,s_0)$.

\begin{definition}
	\label{def:indecomp}
	An FSC is called \emph{indecomposable} if for every $\epsilon>0$ there exists an $n_0\in\N$ such that for all $n\geq n_0$ we have $\big|q^n(s_n|x^n,s_0) - q^n(s_n|x^n,s_0')\big| \leq \epsilon$ for all $s_n\in\sS$, $x^n\in\sX^n$, $s_0\in\sS$, and $s_0'\in\sS$.
\end{definition}

For the capacity of IFSCs, we need the functions: 
\begin{subequations}\label{eq:clncun}
\begin{align}
	\Cl_n(\{p,q\}) &= \frac{1}{n}\max_{X^n}\min_{s_0}I(X^n;Y^n|s_0) \label{eq:cln}\\
	\Cu_n(\{p,q\}) &= \frac{1}{n}\max_{X^n}\max_{s_0}I(X^n;Y^n|s_0). \label{eq:cun}
\end{align}
\end{subequations}

\begin{remark}
	Note that for fixed $n\in\N$ and computable parameters $\{p,q,s_0\}\in\sP_c\times\sQ_c\times\sS$, the functions $\Cl_n$ and $\Cu_n$ in \eqref{eq:clncun} are computable functions, i.e., we have $\Cl_n:\sP_c\times\sQ_c\times\sS\rightarrow\R_c$ and $\Cu_n:\sP_c\times\sQ_c\times\sS\rightarrow\R_c$.
\end{remark}

The sequences $\{\Cl_n\}_{n=1}^\infty$ and $\{\Cu_n\}_{n=1}^\infty$ for $\{p,q,s_0\}\in\sP\times\sQ\times\sS$ converge and we have
\begin{subequations}\label{eq:clcu}
	\begin{align}
	\Cl(\{p,q\}) = \lim_{n\rightarrow\infty} \Cl_n(\{p,q\}) \label{eq:cl}\\
	\Cu(\{p,q\}) = \lim_{n\rightarrow\infty} \Cu_n(\{p,q\}) \label{eq:cu}
	\end{align}
\end{subequations}
which are also called \emph{lower capacity} and \emph{upper capacity}, respectively. If the FSC is indecomposable, lower and upper capacities coincide and are equal to the capacity, i.e., $\Cl(\{p,q\}) = \Cu(\{p,q\}) = C(\{p,q,s_0\})$.

\subsection{Main Problem Formulation}
\label{sec:fsc_problem}

For fixed alphabets $\sX$, $\sY$, and $\sS$, the capacity $C$ of an FSC is a function of the underlying system parameters $\{p,q,s_0\}$. The previous discussion leads to the following questions of interest:
\vspace*{0.25\baselineskip}

\begin{tcolorbox}[colback=white,boxrule=0.125ex]
	{\bf Question 1:} Is the capacity $C(\{p,q,s_0\})$ for fixed initial state $s_0\in\sS$ a computable continuous function on $\sP$ and $\sQ$?
\end{tcolorbox}

Here, we allow arbitrary $p\in\sP$ and $q\in\sQ$ as inputs for the capacity function $C$. From Corollary \ref{cor:corollary4} we alrady see that this question is naturally connected to the question whether or not it is possible to find computable continuous lower and upper bounds on the capacity. Lower bounds originate from actual coding schemes and general achievability results, while the upper bounds are established via converse arguments. 

From a practical point of view, such lower and upper bounds should be computable to enable a numerical evaluation on digital computers. Therefore, it is reasonable to study the capacity $C$ as a function on computable inputs $p\in\sP_c$ and $q\in\sQ_c$. This leads to following question:
\vspace*{0.5\baselineskip}

\begin{tcolorbox}[colback=white,boxrule=0.125ex]
	{\bf Question 2:} Is the capacity $C(\{p,q,s_0\})$ for fixed initial state $s_0\in\sS$ a Borel computable function (and therewith Turing computable)?
\end{tcolorbox}
\vspace*{0.25\baselineskip}

While Borel computability is a strong notion of computability, Banach-Mazur computability is considered to be the weakest form of computability and it is of interest to pose a similar question for this notion as follows:
\vspace*{0.5\baselineskip}

\begin{tcolorbox}[colback=white,boxrule=0.125ex]
	{\bf Question 3:} Is the capacity $C(\{p,q,s_0\})$ for fixed initial state $s_0\in\sS$ a Banach-Mazur computable function?
\end{tcolorbox}
\vspace*{0.25\baselineskip}

As for the first question, the lower and upper bounds on the capacity in Questions 2 and 3 should be algorithmically computable. 
\vspace*{0.5\baselineskip}

\begin{tcolorbox}[colback=white,boxrule=0.125ex]
	{\bf Question 4:} Is the capacity $C(\{p,q,s_0\})$ for fixed initial state $s_0\in\sS$ approximately Turing computable?
\end{tcolorbox}

\begin{remark}
	In the following, we will provide negative answers to Questions 1-3. As the capacity is shown to be non-computable, Question 4 about whether or not the capacity is at least approximately computable becomes particularly relevant. To make sure that this question is not trivial, the tolerated approximation error should not be too large. Also for Question 4 we will provide a negative answer for certain approximation errors.
\end{remark}

\section{Computability Analysis of the FSC Capacity}
\label{sec:noncomp}

In this section, we show that the capacity function $C$ is not Banach-Mazur computable and therewith also not Borel and Turing computable. Subsequently, we discuss the case when the capacity of an FSC becomes a computable real number.

\subsection{Non-Banach-Mazur Computability}
\label{sec:noncomp_bm}

In general, the capacity of an FSC is given by \eqref{eq:hanverdu2} and for every $n\in\N$, the inf-information rate expression $\sup_{\vec{X}}\underline{\vec{I}}(\vec{X};\vec{Y})$ is indeed Turing computable (we omit the details due to space constraints). However, in the end the capacity in \eqref{eq:hanverdu2} is given by the limit of for $n\rightarrow\infty$ and the convergence of this limit need not be effective and uniform on $\{p,q,s_0\}\in\sPc\times\sQc\times\sS$, i.e., for a given $\epsilon\in\Q$, e.g., $\epsilon^*=\frac{1}{\mu}$ with $\mu\in\N$, we cannot algorithmically compute when $|f_n(p,q,s_0)-C(\{p,q,s_0\})|<\epsilon$ is satisfied.

\begin{theorem}
	\label{the:comp}
	For all $|\sX|\geq2$, $|\sY|\geq2$, and $|\sS|\geq2$, the capacity function $C(\{p,q,s_0\}):\sPc\times\sQc\times\sS\rightarrow\R$ of the FSC with parameters $\{p,q,s_0\}$ is not Banach-Mazur computable.
\end{theorem}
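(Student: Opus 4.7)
I would prove non-Banach-Mazur computability by reducing from a recursively enumerable, non-recursive set $\sA\subset\N$, which exists by the discussion in Section~\ref{sec:computability_numbers}. Fix such a set together with a recursive enumeration $\varphi:\N\to\N$ of $\sA$. The goal is to construct a computable sequence $\{(p_n,q_n,s_0)\}_{n\in\N}$ of FSC parameter triples (with $\sX=\sY=\sS=\{0,1\}$) such that $C(\{p_n,q_n,s_0\})=1$ if $n\notin\sA$ and $C(\{p_n,q_n,s_0\})=0$ if $n\in\sA$. If $C$ were Banach-Mazur computable, Definition~\ref{def:banachmazur} would make $\{C(\{p_n,q_n,s_0\})\}_{n\in\N}$ a computable sequence of computable reals, so approximating each $C_n$ within $1/3$ would yield a recursive characteristic function of $\sA$, contradicting non-recursiveness.

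\textbf{Encoding $\sA$ computably.} For each $n$ set $\epsilon_n:=2^{-k_n}$ with $k_n=\min\{k:\varphi(k)=n\}$ when $n\in\sA$, and $\epsilon_n:=0$ otherwise. This sequence is uniformly computable: to approximate $\epsilon_n$ within $2^{-N}$, simulate $\varphi(0),\dots,\varphi(N)$; if some $\varphi(k)=n$ with $k\leq N$ return $2^{-k_0}$ (exact, with $k_0$ the least such index), else return $0$ (the true value is then at most $2^{-(N+1)}<2^{-N}$). Hence $\{\epsilon_n\}_{n\in\N}$ is a computable sequence of computable reals with $\epsilon_n>0\iff n\in\sA$.

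\textbf{Construction and capacity.} Take $s_0=0$ and define $p_n(y|x,0)=\mathbf{1}[y=x]$, $p_n(y|x,1)=1/2$, together with $q_n(0|x,0)=1-\epsilon_n$, $q_n(1|x,0)=\epsilon_n$, $q_n(1|x,1)=1$. Thus state~$0$ is a noiseless identity, state~$1$ is absorbing and jams the output. Only $q_n$ depends on $n$, computably through $\epsilon_n$, so $\{(p_n,q_n,0)\}_{n\in\N}$ is a computable sequence in $\sPc\times\sQc\times\sS$. If $\epsilon_n=0$ the state is identically $0$ and $C(\{p_n,q_n,0\})=1$. If $\epsilon_n>0$, the hitting time $T$ of state~$1$ is geometric with parameter $\epsilon_n$ and, since state transitions do not depend on the input, $T$ is independent of $X^n$; outputs at times $>T$ are uniform and independent of $X^n$. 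Therefore
\[
I(X^n;Y^n\mid s_0=0) \;\leq\; \mathbb{E}[\min(T,n)]\log 2 \;\leq\; \frac{\log 2}{\epsilon_n},
\]
a bound independent of $n$. Dividing by $n$ and invoking the Verd\'u--Han formula~\eqref{eq:hanverdu2} (or Fano's inequality with Definition~\ref{def:achievable} directly) gives $C(\{p_n,q_n,0\})=0$.

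\textbf{Main obstacle.} The delicate step is the capacity computation when $\epsilon_n>0$: the channel is decomposable (state~$1$ is absorbing), so the clean indecomposable-channel formulas of Section~\ref{sec:fsc_indecomp} do not apply. I handle this by the information-theoretic bound above, which shows that the per-symbol mutual information vanishes uniformly over input distributions. The extension to arbitrary $|\sX|,|\sY|,|\sS|\geq 2$ is obtained by padding with unused symbols and states, which affects neither the capacity values $\{0,1\}$ nor the computability of the constructed sequence.
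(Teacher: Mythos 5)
Your proposal is correct, and it shares the paper's overall reduction skeleton---encode membership in a recursively enumerable, non-recursive set $\sA$ into a computable sequence of state-transition parameters, then observe that Banach--Mazur computability would turn the resulting capacity values into a computable sequence of reals and hence make $\sA$ recursive---but the channel gadget and the quantity carrying the discontinuity are genuinely different. The paper keeps a symmetric mixing kernel $q_\lambda$ and never computes the capacity for $\lambda>0$: it only uses that the FSC is then indecomposable, so $C(\{p,q_\lambda,0\})=C(\{p,q_\lambda,1\})$, while at $\lambda=0$ the two initial states give $1$ and $1-H_2(\epsilon)$; the non-computable object is the difference $\Phi=C(\{p,q,1\})-C(\{p,q,0\})$, and the contradiction is reached by running the enumerator of $\sA$ in parallel with a second Turing machine that semidecides $\mu_n>0$, so the conclusion is that the two initial-state slices cannot both be Banach--Mazur computable. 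You instead fix $s_0=0$, make state $1$ absorbing and output-jamming, and show that this single slice jumps from $1$ to $0$; the price is that you must evaluate the capacity of a decomposable channel, which the paper's construction deliberately avoids, and you do this correctly via $I(X^n;Y^n)\le \mathbb{E}[\min(T,n)]\log 2\le (\log 2)/\epsilon_n$ using the independence of the absorption time $T$ from the input (a Fano/weak-converse step then gives $C=0$). What each approach buys: the paper's route needs no capacity computation in the perturbed regime, only the indecomposability criterion; yours pins the non-computability on an explicitly identified initial state, produces the maximal capacity gap $1$ versus $0$ (so a single rational approximation within $1/3$ replaces the paper's two-parallel-machines argument), and the size of that gap feeds directly into quantitative non-approximability statements of the kind used later in the paper. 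Your encoding $\epsilon_n$ is the same device as the paper's $\lambda_{n,m}$ (enumeration index in place of step count), and the padding argument for general $|\sX|,|\sY|,|\sS|\ge 2$ matches the paper's. I see no gap.
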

\begin{proof}
	We first prove the result for $|\sX|=|\sY|=|\sS|=2$ and subsequently outline how it extends to the general case.
	
	If the finite state channel $\{p,q,s_0\}$, $s_0\in\sS=\{0,1\}$, is indecomposable, then the effect of the initial state vanishes and we have
	\begin{equation*}
		C(\{p,q,0\}) = C(\{p,q,1\}) = \Cu(\{p,q\}) = \Cl(\{p,q\})
	\end{equation*}
	and further
	\begin{align*}
		\Cl(\{p,q\}) = \min_{s_0\in\{0,1\}}C(\{p,q,s_0\}) \leq \max_{s_0\in\{0,1\}}C(\{p,q,s_0\}) = \Cu(\{p,q\}) .
	\end{align*}
	Next we consider the channel 
	\begin{equation}
		p(y_n|x_n,0) = \begin{pmatrix}
			1 & 0 \\
			0 & 1
		\end{pmatrix}\!,\quad
		p(y_n|x_n,1) = \begin{pmatrix}
			1\!-\!\epsilon & \epsilon \\
			\epsilon & 1\!-\!\epsilon
		\end{pmatrix}
		\label{eq:p}
	\end{equation}
	for some $0<\epsilon<1/2$, i.e., for state $s_{n-1}=0$ the channel is noiseless, while for $s_{n-1}=1$ it is noisy. Further, we consider the state distribution
	\begin{equation}
		\hat{q}(s_n|x_n,0) = \begin{pmatrix}
			1  \\
			0 
		\end{pmatrix},\quad
		\hat{q}(s_n|x_n,1) = \begin{pmatrix}
			0  \\
			1
		\end{pmatrix}
		\label{eq:hatq}
	\end{equation}
	to be independent of $x_n\in\sX$ so that for $s_n\in\sS$ and $s_{n-1}\in\sS$ arbitrary we have
	\begin{equation}
		\label{eq:fadingstate}
		\hat{q}(s_n|x_n,s_{n-1}) = \hat{q}(s_n|s_{n-1}).
	\end{equation}
	Note that $p$ and $\hat{q}$ as defined above are computable, i.e., we have $p\in\sPc\coloneqq\sP_c(\sY|\sX\times\sS)$ and $\hat{q}\in\sQc\coloneqq\sP_c(\sS|\sX\times\sS)$. In what follows, we consider the finite state channel $\{p,\hat{q},s_0\}$, $s_0\in\{0,1\}$, as defined above.
	
	We observe that $\{p,\hat{q},0\}$ is given by a simple discrete memoryless channel (DMC) $p(y|x,0)$, $x\in\sX$, $y\in\sY$, since the state is always $s_n=0$ due to the definition of $\hat{q}$, cf. \eqref{eq:hatq}. Accordingly, the capacity is $C(\{p,\hat{q},0\})=1$ in this case, since the alphabets are binary and the channel is noiseless. 
	
	We further observe that $\{p,\hat{q},1\}$ corresponds to the DMC $p(y|x,1)$, $x\in\sX$, $y\in\sY$, i.e., it is a binary symmetric channel (BSC). The optimal input distribution is known to be the uniform distribution and the capacity in this case is then $C(\{p,\hat{q},1\})=C_{\text{BSC}}(\epsilon)=1-H_2(\epsilon)<1$.
	
	Next, we show that both functions $C(\{p,q,0\})$ and $C(\{p,q,1\})$ cannot be simultaneously Banach-Mazur computable. For this purpose, we take an arbitrary recursively enumerable, but not recursive, set $\sA\subset\N$. Let $\fT_\sA$ be a Turing machine that stops if and only if for input $n$ we have $n\in\sA$. Otherwise, $\fT_\sA$ runs forever. Such a Turing machine can easily be found as argued next: Let $\varphi_\sA:\N\rightarrow\N$ be a recursive function that lists all elements of the set $\sA$ and for which $\varphi_\sA:\N\rightarrow\sA$ is a unique function.
	
	Let $n\in\N$ be arbitrary. The Turing machine $\fT_\sA$ with input $n$ is defined as follows: We start with $l=1$ and compute $\varphi_\sA(1)$. If $n=\varphi_{\sA}(1)$, then the Turing machine stops. In the other case, the Turing machine computes $\varphi_\sA(2)$. Similarly, if $n=\varphi_{\sA}(2)$, then the Turing machine stops and otherwise, it continues computing the next element. It is clear that this Turing machine stops if and only if $n\in\sA$. 
	
	Assume that both functions $C(\{p,q,0\})$ and $C(\{p,q,1\})$ are Banach-Mazur computable. For $\lambda\in[0,\frac{1}{2}]\cap\R_c$ we consider
	\begin{align*}
		q_\lambda(s_n|x_n,0) = \begin{pmatrix}
			1-\lambda \\ \lambda
		\end{pmatrix}
		\quad\text{and}\quad		
		q_\lambda(s_n|x_n,1) = \begin{pmatrix}
			\lambda \\ 1-\lambda
		\end{pmatrix}.
	\end{align*}
	Of course, for $\lambda\in[0,\frac{1}{2}]\cap\R_c$, $q_\lambda(s_n|x_n,0)$ and $q_\lambda(s_n|x_n,1)$ are computable probability distributions. Let
	\begin{align*}
		q_0(s_n|x_n,0) = \hat{q}(s_n|x_n,0), \quad s_n\in\sS, x_n\in\sX, \\
		q_0(s_n|x_n,1) = \hat{q}(s_n|x_n,1), \quad s_n\in\sS, x_n\in\sX.
	\end{align*}
	We have
	\begin{equation*}
		C(\{p,q_0,1\})-C(\{p,q_0,0\}) = 1 - (1-H_2(\epsilon)) = H_2(\epsilon) > 0.
	\end{equation*}
	For $0<\lambda\leq\frac{1}{2}$ the FSC $\{p,q_\lambda,s_0\}$ $s_0\in\sS$ is indecomposable and therewith we have
	\begin{equation*}
		C(\{p,q_\lambda,0\})=C(\{p,q_\lambda,1\}).
	\end{equation*}
	Now, for every $n\in\N$ and $m\in\N$ let
	\begin{align*}
		\lambda_{n,m} = \begin{cases}
			\frac{1}{2^l} &\quad \fT_\sA \text{ stops for input } n \text{ after } l\leq m \text{ steps} \\
			\frac{1}{2^m} &\quad \fT_\sA \text{ does not stop for input } n \text{ after } m \text{ steps}. 
		\end{cases}
	\end{align*}
	Then the sequence $\{\lambda_{n,m}\}_{n,m\in\N}$ is a computable double sequence of rational numbers. For arbitrary $n\in\N$ and arbitrary $m,m_1\in\N$, $m\geq m_1$, it holds
	\begin{equation}
		\label{eq:bmB}
		\big|\lambda_{n,m} - \lambda_{n,m_1}\big| = \big|\lambda_{n,m_1}-\lambda_{n,m}\big| = \lambda_{n,m_1}-\lambda_{n,m} < \frac{1}{2^{m_1}}
	\end{equation} 
	since if the Turing machine $\fT_\sA$ has stopped for input $n$ for $l\leq m_1$, then $\lambda_{n,m_1}=\lambda_{n,m}$ and \eqref{eq:bmB} is trivially satisfied. If the Turing machine $\fT_\sA$ has not stopped for input $n$ after $m_1$ iterations, then $\lambda_{n,m_1}=\lambda_{n,m}=\frac{1}{2^{m_1}}-\lambda_{n,m}<\frac{1}{2^{m_1}}$, since $\lambda_{n,m}>0$ for all $n\in\N$, so that \eqref{eq:bmB} is satisfied as well. Accordingly, we observe that $\{\lambda_{n,m}\}_{m\in\N}$ is a sequence that converges effectively and there exists one $\lambda_n^*\in\R_c$ with
	\begin{equation*}
		\lim_{m\rightarrow\infty}\big|\lambda_n^*-\lambda_{n,m}\big|=0.
	\end{equation*}
	Furthermore, since $\{\lambda_{n,m}\}_{n,m\in\N}$ is computable double sequence, the sequence $\{\lambda_n^*\}_{n\in\N}$ is a computable sequence of computable real numbers. It further holds $\lambda_n^*\geq0$ with equality if and only if the Turing machine $\fT_\sA$ does not stop for input $n$. 
	
	Since $C(\{p,q,0\})$ and $C(\{p,q,1\})$ are assumed to be Banach-Mazur computable functions, the difference $\Phi(\{p,q\})=C(\{p,q,1\})-C(\{p,q,0\})$ is a Banach-Mazur computable function as well. Then, the sequence $\{\mu_n\}_{n\in\N}$ with 
	\begin{equation*}
		\mu_n = \Phi(\{p,q_{\lambda_n^*}\}), \quad n\in\N,
	\end{equation*}
	is a computable sequence of computable real numbers. With this, we find a computable double sequence $\{\nu_{n,m}\}_{n,m\in\N}$ of rational numbers with
	\begin{equation*}
		\big|\mu_n-\nu_{n,m}\big| < \frac{1}{2^m}.
	\end{equation*}
	For every $n$, we can consider the following Turing machine $\fT_*$: For input $n$, we set $m=1$ and check if
	\begin{equation*}
		\nu_{n,1} > \frac{1}{2}
	\end{equation*}
	is satisfied. If this is true, the Turing machine stops. Otherwise, we set $m=2$ and check if
	\begin{equation*}
	\nu_{n,2} > \frac{1}{4}
	\end{equation*}
	is satisfied. If this is true, the Turing machine stops. Otherwise, it continues as described. Next, we show that this Turing machine $\fT_*$ stops for input $n$ if and only if $\mu_n>0$.
	
	``$\Leftarrow$'' If $\mu_n>0$, then there exists an $m_0$ with
	\begin{equation*}
		\frac{1}{2^{m_0}} < \frac{\mu_n}{2}
	\end{equation*}
	so that
	\begin{align*}
		\mu_n &= \mu_n - \nu_{n,m_0} + \nu_{n,m_0} \leq \big|\mu_n - \nu_{n,m_0}\big| + \nu_{n,m_0} \\
			&< \frac{1}{2^{m_0}} + \nu_{n,m_0} < \frac{\mu_n}{2} + \nu_{n,m_0}, 
	\end{align*}
	i.e., the Turing machine $\fT_*$ stops for input $n$ within $m_0$ steps.
	
	``$\Rightarrow$'' It holds $\nu_{n,\hat{m}}>\frac{1}{2^{\hat{m}}}$ for a certain $\hat{m}$. Then,
	\begin{align*}
		\frac{1}{2^{\hat{m}}} &< \nu_{n,\hat{m}} = \nu_{n,\hat{m}} - \mu_n + \mu_n \\
			&\leq \big|\nu_{n,\hat{m}} - \mu_n\big| + \mu_n < \frac{1}{2^{\hat{m}}} + \mu_n
	\end{align*}
	so that $\mu_n>0$ is true. 
	
	Next, for input $n\in\N$, we define the Turing machine $\fT_\sS$ as follows: We run both previous Turing machines $\fT_\sA$ and $\fT_*$ in parallel for input $n$, where each Turing machine operates step by step as discussed above. We have already shown that $\fT_\sA$ stops for input $n$ if and only if $n\in\sA$. Further, we have shown that $\fT_*$ stops for input $n$ if and only if $\mu_n>0$. This is true if and only if the Turing machine $\fT_\sA$ does not stop for input $n$, i.e., whenever $n\in\sA^c$. As a consequence, one of these Turing machines must always stop for an input $n$. We set
	\begin{align*}
		\fT_\sS(n) = \begin{cases}
			\big\{n\in\sA\big\} &\text{ if } \fT_\sA \text{ stops for input } n \\
			\big\{n\in\sA^c\big\} &\text{ if } \fT_* \text{ stops for input } n.
			\end{cases}
	\end{align*}
	With this, we have shown that $\sA$ is a recursive set. But this is a contradiction so that the assumption that both functions $C(\{p,q,0\})$ and $C(\{p,q,1\})$ are Banach-Mazur computable is wrong. This completes the proof.
	
	For $|\sX|\geq2$, $|\sY|\geq2$, and $|\sS|\geq2$ arbitrary, we take the sequences of parameters $\{p,\hat{q}\}$ and $\{p,q_k\}$ as above and extend them as follows: We set $p(y|x_n,s_{n-1})=0$ for $y\in\sY\backslash\{0,1\}$, $x_n\in\sX$, $s_{n-1}\in\sS$ and also for $x_n\in\sX\backslash\{0,1\}$, $s_{n-1}\in\sS\backslash\{0,1\}$ to preserve the above constructed behavior. We do the same for $\hat{q}$ and $q_k$. We observe that we still have $p\in\sPc$ and $\hat{q},q_k\in\sQc$. With this and the previous arguments we can conclude on the same result.
\end{proof}

\begin{remark}
	This result and implications thereof can further be strengthened for countably infinite state sets. In particular, for computable compound channels with countably infinite state sets, the capacity need not be a computable real number in general, cf. also \cite{BocheSchaeferPoor-2020-TSP-CommunicationChannelUncertainty}.
\end{remark}

\begin{remark}
	The techniques used to prove Theorem \ref{the:comp} can be extended to various channel models and operational (communication) tasks in information theory. For example, the problem of secret key generation with rate-limited public discussion has been studied in \cite{BocheSchaeferBaurPoor-2019-TSP-ComputabilitySKGAuthentication} and the problem of identification with feedback in \cite{BocheSchaeferPoor-2020-TIT-IDF}.
\end{remark}

\begin{remark}
	The proof of Theorem \ref{the:comp} provides additional deeper insights. This has been developed in detail in \cite{BocheSchaeferPoor-2020-TIT-IDF} for the identification with feedback capacity. By modifying the proof above, one is able to show the following: It is possible to connect the algorithmic computation of the capacity to hard problems in pure mathematics such as Goldbach's Conjecture and the Riemann Hypothesis. To this end, it is possible to find an oracle Turing machine with the following properties: Given finitely many values of the capacity function of the given computable channel, the oracle Turing machine that gets the capacity value of certain computable FSCs as oracle can immediately prove or disprove Goldbach's Conjecture and the Riemann Hypothesis.
\end{remark}

\begin{remark}
	It is not clear if similar results hold for the capacity of time-continuous channels as in \cite{Kailath-1959-TechRep-SamplingLinearTimeVariantFilter}. Accordingly, it is not clear if the technique presented above is applicable in this case at all. A more detailed discussion on this is given in Section \ref{sec:discussion}.
\end{remark}

In the construction of the proof of Theorem \ref{the:comp} above, we assume the special case in which the current state $s_n$ does not depend on the current input $x_n$ but only on the previous state $s_{n-1}$. This is the special class of \emph{finite fading channels (FFCs)} that naturally applies to wireless communications where the fading state of the channel is independent of the transmitted signal. We immediately obtain the following corollary.

\begin{corollary}
	\label{cor:ffc}
	For all $|\sX|\geq2$, $|\sY|\geq2$, and $|\sS|\geq2$, the capacity function $C(\{p,q,s_0\}):\sPc\times\sQc\times\sS\rightarrow\R$ of the FFC with parameters $\{p,q,s_0\}$ is not Banach-Mazur computable. 
\end{corollary}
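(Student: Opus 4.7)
The plan is to observe that the proof of Theorem~\ref{the:comp} already produces its contradiction using channels that lie inside the FFC subclass, so no new construction is needed: the corollary will follow essentially by inspection of the previous argument. Concretely, I would first point out that the parametric family $\{q_\lambda\}_{\lambda\in[0,1/2]\cap\R_c}$ built in the proof of Theorem~\ref{the:comp}, together with its limit point $\hat{q}$ from \eqref{eq:hatq}, satisfies
\begin{equation*}
    q_\lambda(s_n|x_n,s_{n-1}) = q_\lambda(s_n|s_{n-1})
\end{equation*}
for every $x_n\in\sX$, as is made explicit in \eqref{eq:fadingstate}. Thus each triple $\{p,q_\lambda,s_0\}$ used in that argument is an FFC, and the restriction of the FSC capacity function to FFC parameters agrees with the full FSC capacity on this family.

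Next I would rerun the contradiction argument with the FFC restriction in force. Assume for contradiction that the FFC capacity $C:\sPc\times\sQc\times\sS\to\R$ (with $\sQc$ now restricted to input-independent kernels) is Banach-Mazur computable. Then $\Phi(\{p,q\}) = C(\{p,q,1\})-C(\{p,q,0\})$ is Banach-Mazur computable on FFC parameters. Fix a recursively enumerable but non-recursive $\sA\subset\N$ and form the computable sequence $\{\lambda_n^*\}_{n\in\N}$ of computable real numbers exactly as in the proof of Theorem~\ref{the:comp}, so that $\lambda_n^*=0$ iff $\fT_\sA$ does not halt on input $n$. Since every $q_{\lambda_n^*}$ is still input-independent, the sequence $\mu_n = \Phi(\{p,q_{\lambda_n^*}\})$ is a computable sequence of computable real numbers with $\mu_n=H_2(\epsilon)>0$ when $\lambda_n^*=0$ and $\mu_n=0$ otherwise. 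The parallel race of $\fT_\sA$ and the Turing machine $\fT_*$ that detects positivity of $\mu_n$ then decides $\sA$, contradicting non-recursiveness.

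Finally, I would verify that the extension to arbitrary alphabets with $|\sX|\geq2$, $|\sY|\geq2$, $|\sS|\geq2$ from the end of the proof of Theorem~\ref{the:comp} preserves the FFC property: the recipe there fills the additional rows/columns of $p$, $\hat{q}$ and $q_\lambda$ with zeros, so the extended $q_\lambda(s_n|x_n,s_{n-1})$ remains a function of $s_{n-1}$ alone. Since there is no analytical obstacle at all here — the corollary is genuinely a by-product of the construction used for Theorem~\ref{the:comp} — the only point requiring a sentence of care is that the embedding into larger alphabets does not spoil the input-independence of the state kernel, which it does not.
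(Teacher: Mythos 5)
Your proposal is correct and matches the paper's own reasoning: the paper derives Corollary~\ref{cor:ffc} immediately from the observation that the state kernels $\hat{q}$ and $q_\lambda$ in the proof of Theorem~\ref{the:comp} are already independent of $x_n$ (cf.~\eqref{eq:fadingstate}), so the entire contradiction argument takes place inside the FFC subclass. Your additional check that the zero-padding extension to larger alphabets preserves input-independence of the state kernel is the right (and only) point needing verification, and it holds.
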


We see that, in general, the capacity of an FSC is not Banach-Mazur and therewith also not Turing computable. However, for special cases of FSCs the capacity becomes Turing computable as e.g. the zero-error capacity \cite{AhlswedeKaspi-1987-TIT-PermutingChannels} or the feedback capacity  \cite{Permuter-2008-TIT-TrapdoorChannelFeedback} of the trapdoor channel; but in general, there is no algorithm that can compute the capacity as a function of the parameters $\{p,q,s_0\}$.

\begin{remark}
	\label{rem:compnumber}
	Banach-Mazur computability requires the function to operate on computable reals, cf. Definition~\ref{def:banachmazur}. In Theorem~\ref{the:comp} we have shown that $C(\{\cdot,\cdot,s_0\})$ is not Banach-Mazur computable, but this does not imply that the function $C(\{\cdot,\cdot,s_0\})$ itself is not a mapping from computable probability distributions to computable reals, i.e., 
	\begin{equation}
		\label{eq:banachmazur}
		C(\{\cdot,\cdot,s_0\}):\sPc\times\sQc\rightarrow\R_c \quad\text{ for all } s_0\in\sS.
	\end{equation}
	The problem in showing this, is the following: Although the capacity expression \eqref{eq:hanverdu2} is a multi-letter formula which converges, the speed of convergence does not need to be effective, i.e., it may not be representable by an effectively computable function. And indeed, it is not clear whether or not the convergence of \eqref{eq:hanverdu2} is effective. 
\end{remark}

Next, we study the existence of computable tight lower and upper bounds on the capacity function. First, we study such bounds which are computable continuous functions on the parameters $\{p,q\}$. As lower and upper bounds should be numerically evaluable, this is a very reasonable requirement, cf. also Remark \ref{rem:comp}.

\begin{theorem}
	\label{the:fsctheorem5}
	For $|\sX|\geq2$, $|\sY|\geq2$, and $|\sS|\geq2$ arbitrary but fixed,  there exists an $s_0\in\sS$ such that the following holds: There exists no computable sequences $\{F_N\}_{N\in\N}$ and $\{G_N\}_{N\in\N}$ of computable continuous functions with
	\begin{enumerate}
		\item $F_N:\sP\times\sQ\rightarrow\R$ and $G_N:\sP\times\sQ\rightarrow\R$, $N\in\N$,
		\item $F_N(p,q)\leq C(\{p,q,s_0\})$, $p\in\sP$, $q\in\sQ$, $N\in\N$, and $\lim_{N\rightarrow\infty}F_N(p,q)=C(\{p,q,s_0\})$ for all $p\in\sP$, $q\in\sQ$,
		\item $C(\{p,q,s_0\})\leq G_N(p,q)$, $p\in\sP$, $q\in\sQ$, $N\in\N$, and $\lim_{N\rightarrow\infty}G_N(p,q)=C(\{p,q,s_0\})$ for all $p\in\sP$, $q\in\sQ$.
	\end{enumerate}
\end{theorem}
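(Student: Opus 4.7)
The plan is to derive Theorem \ref{the:fsctheorem5} by a contradiction argument that reduces the non-existence of computable continuous sandwich sequences to the non-Banach-Mazur computability established in Theorem \ref{the:comp}. The proof of Theorem \ref{the:comp} already shows, in the case $|\sX|=|\sY|=|\sS|=2$ and then in the general alphabet case, that the pair $C(\{\cdot,\cdot,0\})$ and $C(\{\cdot,\cdot,1\})$ cannot simultaneously be Banach-Mazur computable on $\sPc\times\sQc$. Consequently there exists a state $s_0^*\in\sS$ for which $C(\{\cdot,\cdot,s_0^*\}):\sPc\times\sQc\rightarrow\R$ fails to be Banach-Mazur computable; this is the state I would use in the statement of Theorem \ref{the:fsctheorem5}.

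Suppose, for contradiction, that computable sequences $\{F_N\}_{N\in\N}$ and $\{G_N\}_{N\in\N}$ of computable continuous functions $F_N,G_N:\sP\times\sQ\rightarrow\R$ exist that satisfy conditions (1)--(3) of the theorem for this $s_0^*$. The hypotheses of Theorem \ref{the:fsctheorem5} do not impose the nested chain required by Corollary \ref{cor:corollary4}, so I would first monotonize by setting
\begin{equation*}
	\tilde F_N(p,q) = \max_{1\leq k\leq N}F_k(p,q), \qquad \tilde G_N(p,q) = \min_{1\leq k\leq N}G_k(p,q).
\end{equation*}
Since the maximum and minimum of finitely many computable continuous functions are again computable continuous, uniformly in $N$, $\{\tilde F_N\}_{N\in\N}$ and $\{\tilde G_N\}_{N\in\N}$ are computable sequences of computable continuous functions. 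They are monotonically increasing and decreasing respectively, they still sandwich $C(\{\cdot,\cdot,s_0^*\})$, and they still converge pointwise on $\sP\times\sQ$ to $C(\{\cdot,\cdot,s_0^*\})$, since $F_N\rightarrow C(\{\cdot,\cdot,s_0^*\})$ and $G_N\rightarrow C(\{\cdot,\cdot,s_0^*\})$ force the monotonized sequences to do the same.

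I would then invoke the compact-space version of Corollary \ref{cor:corollary4}, applied to the compact product of simplices $\sP\times\sQ$; the extension of Theorem \ref{the:theorem3} and hence of Corollary \ref{cor:corollary4} from $[0,1]$ to compact spaces is explicitly recorded in the remarks following Theorem \ref{the:theorem3}. The corollary then yields that $C(\{\cdot,\cdot,s_0^*\})$ is itself a computable continuous function on $\sP\times\sQ$. Restricting the domain to the computable parameters $\sPc\times\sQc$, the first clause of Definition \ref{def:compcont} is exactly the defining property in Definition \ref{def:banachmazur}, so $C(\{\cdot,\cdot,s_0^*\})$ would be Banach-Mazur computable on $\sPc\times\sQc$, contradicting the choice of $s_0^*$. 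The main obstacle I anticipate is the effective compact-space version of Corollary \ref{cor:corollary4}: one needs an explicit effective parameterization of the product simplex $\sP\times\sQ$ and has to verify that the compactness step in the proof of Theorem \ref{the:theorem3}, together with the monotonization construction, remains effective in this multi-dimensional setting, which should follow the same pattern as the original $[0,1]$ proof but requires careful bookkeeping.
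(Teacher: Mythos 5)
Your proposal is correct and follows essentially the same route as the paper: assume the sandwiching sequences exist, apply Corollary \ref{cor:corollary4} (in its compact-space form on $\sP\times\sQ$) to conclude that $C(\{\cdot,\cdot,s_0\})$ would be a computable continuous, hence Banach--Mazur computable, function, contradicting Theorem \ref{the:comp}. Your explicit monotonization step $\tilde F_N=\max_{1\leq k\leq N}F_k$, $\tilde G_N=\min_{1\leq k\leq N}G_k$ is a worthwhile addition, since the theorem's hypotheses do not literally supply the nested chain that Corollary \ref{cor:corollary4} requires, a detail the paper's one-line proof passes over.
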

\begin{proof}
	The result follows immediately from Corollary \ref{cor:corollary4}. If such sequences $\{F_N\}_{N\in\N}$ and $\{G_N\}_{N\in\N}$ would exist, then $C$ would be a computable continuous function which is a contradiction, since $C$ is for a certain $s_0\in\sS$ not Banach-Mazur computable. 
\end{proof}

This result shows that an approximation of $C$ by computable continuous functions is not possible. From this, we can immediately conclude the following.

\begin{corollary}
	\label{cor:fsccor2}
	For all computable sequences $\{F_N\}_{N\in\N}$ and $\{G_N\}_{N\in\N}$ of computable continuous functions for which there exists an $s_0\in\sS$ such that for $N\in\N$ it holds that
	\begin{equation*}
		F_N(p,q) \leq C(\{p,q,s_0\})
	\end{equation*}
	for all $p\in\sP$ and $q\in\sQ$, and for $N\in\N$ it holds that
	\begin{equation*}
		C(\{p,q,s_0\}) \leq G_N(p,q)
	\end{equation*}
	for all $p\in\sP$ and $q\in\sQ$, there must exist a $(p_*,q_*)\in\sP\times\sQ$ such that
	\begin{equation}
	\begin{split}
		0 < \max\Big\{&\limsup_{N\rightarrow\infty}\big|C(\{p_*,q_*,s_0\})-F_N(p_*,q_*)\big|,\\
			&\qquad \limsup_{N\rightarrow\infty}\big|C(\{p_*,q_*,s_0\})-G_N(p_*,q_*)\big|\Big\}.
		\label{eq:fsccor}
	\end{split}
	\end{equation}
\end{corollary}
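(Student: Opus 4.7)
The plan is to derive Corollary~\ref{cor:fsccor2} as a direct contrapositive consequence of Theorem~\ref{the:fsctheorem5}. First I would fix the initial state $s_0\in\sS$ supplied by Theorem~\ref{the:fsctheorem5}, and let $\{F_N\}_{N\in\N}$ and $\{G_N\}_{N\in\N}$ be any computable sequences of computable continuous functions from $\sP\times\sQ$ to $\R$ satisfying $F_N(p,q)\leq C(\{p,q,s_0\})\leq G_N(p,q)$ for all $N\in\N$ and all $(p,q)\in\sP\times\sQ$, exactly as in the hypothesis of the corollary.

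Next I would suppose for contradiction that the conclusion fails, so that for \emph{every} $(p,q)\in\sP\times\sQ$,
\[
\max\Bigl\{\limsup_{N\to\infty}\bigl|C(\{p,q,s_0\}) - F_N(p,q)\bigr|,\ \limsup_{N\to\infty}\bigl|C(\{p,q,s_0\}) - G_N(p,q)\bigr|\Bigr\}=0.
\]
Since each of the two $\limsup$ quantities is nonnegative, this is equivalent to the pointwise convergence $F_N(p,q)\to C(\{p,q,s_0\})$ and $G_N(p,q)\to C(\{p,q,s_0\})$ on all of $\sP\times\sQ$.

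Combined with the one-sided bounds assumed at the outset, the pair $(\{F_N\},\{G_N\})$ would then satisfy all three conditions of Theorem~\ref{the:fsctheorem5} for the chosen $s_0$: computable continuity of each term, a lower sandwich converging pointwise to $C$, and an upper sandwich converging pointwise to $C$. But Theorem~\ref{the:fsctheorem5} asserts precisely that no such pair of sequences can exist for this particular $s_0$, producing the desired contradiction. Hence there must exist at least one witness $(p_*,q_*)\in\sP\times\sQ$ for which \eqref{eq:fsccor} holds.

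The main obstacle here is not mathematical depth but logical bookkeeping: one has to read the corollary so that the $s_0$ in its hypothesis is identified with the specific ``bad'' initial state produced by Theorem~\ref{the:fsctheorem5}, and one has to be careful that the contrapositive really does supply simultaneously the lower and upper approximation properties needed to invoke that theorem. All of the substantive non-computability content has already been absorbed into the proofs of Theorem~\ref{the:comp} and Theorem~\ref{the:fsctheorem5}; this corollary merely re-expresses that obstruction in the customary $\limsup$ language of approximation theory.
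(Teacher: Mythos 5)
Your proof is correct and is essentially the paper's own argument: the paper likewise disposes of the corollary in one line as a contrapositive of Theorem~\ref{the:fsctheorem5}, observing that if \eqref{eq:fsccor} were zero for all $(p,q)\in\sP\times\sQ$ then $\{F_N\}$ and $\{G_N\}$ would converge pointwise to $C(\{\cdot,\cdot,s_0\})$ and thus realize exactly the approximating pair that Theorem~\ref{the:fsctheorem5} rules out. Your explicit bookkeeping --- reducing the vanishing of the two nonnegative $\limsup$'s to pointwise convergence and identifying the $s_0$ in the hypothesis with the ``bad'' initial state supplied by Theorem~\ref{the:fsctheorem5} --- is just a careful expansion of that same step.
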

\begin{proof}
	These statements follow immediately from Theorem \ref{the:fsctheorem5}, since if \eqref{eq:fsccor} would be zero for all $(p,q)\in\sP\times\sQ$, then this would imply that $C$ is a computable function.
\end{proof}

As a consequence from this result we can conclude that for the capacity of general FSCs, there is either no computable achievability or no computable converse (or both are non-computable).

The functions $\{F_N\}$ can be interpreted as lower bounds for achievable rates and the capacity respectively. Of course, such bounds should be effectively computable so that they can be numerically evaluated. These bounds should improve with increasing $N\in\N$, i.e., $F_N(p,q)\leq F_{N+1}(p,q)$, $(p,q)\in\sP\times\sQ$, and further should be asymptotically tight, i.e., for $N\rightarrow\infty$ the sequence $\{F_N\}_{N\in\N}$ should converge pointwise to $C(\{p,q,s_0\})$.

Accordingly, the functions $\{G_N\}$ can be seen as upper bounds on the achievable rates and the capacity respectively. Similarly, it is required that these bounds are effectively computable and further $C(\{p,q,s_0\})\leq G_{N+1}(p,q)\leq G_N(p,q)$, $(p,q)\in\sP\times\sQ$, i.e., the bounds should improve with increasing $N\in\N$.

However, Corollary \ref{cor:fsccor2} shows that we cannot find such functions $\{F_N\}$ and $\{G_N\}$. Accordingly, it is impossible that both achievability and converse are effectively computable at the same time. As a consequence, one of these must be non-computable so that we cannot find a entropic characterization for the capacity. This also means that there exist computable FSCs for which computable lower and upper bounds can never be simultaneously be tight.

\begin{remark}
	Finally, we note that the results of Theorem \ref{the:fsctheorem5} and Corollary \ref{cor:fsccor2} remain true if the requirement of $\{F_N\}_{N\in\N}$ and $\{G_N\}_{N\in\N}$ being computable sequences of computable continuous functions is weakened to computable sequences of Banach-Mazur computable sequences.
\end{remark}

Note that Corollary \ref{cor:fsccor2} further provides a negative answer to Question~4. In particular, the proof of Theorem \ref{the:fsctheorem5} yields lower bounds for the error, for which the capacity cannot be approximated. Note that the statement of non-approximability is strictly stronger than the statement of non-Turing-computability. Indeed, with the results in \cite{BocheSchaeferPoor-2019-ITW-NonIID} it is possible to show that there are channels whose capacity is not Turing computable but are approximable for any given approximation error.

\subsection{Capacity being a Computable Real Number}
\label{sec:noncomp_number}

Next, we further study the behavior of the capacity function \eqref{eq:banachmazur} in more detail and address the question if the capacity value itself is a computable real number, cf. also Remark \ref{rem:compnumber}. The following Theorem \ref{the:cu} provides a result for a large class of computable FSCs.

\begin{theorem}
	\label{the:cu}
	For every computable FSC $\{p,q,s_0\}$, $s_0\in\sS$, that satisfies $\Cu(\{p,q\})=\Cl(\{p,q\})$, we have $C(\{p,q,s_0\})\in\R_c$ for all $s_0\in\sS$, i.e., the capacity is a computable real number.
\end{theorem}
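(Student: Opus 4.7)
The plan is to sandwich $C(\{p,q,s_0\})$ between two monotone computable sequences built from $\Cl_n(\{p,q\})$ and $\Cu_n(\{p,q\})$, and then invoke Theorem \ref{the:theorem1}. First I would note that for any FSC the classical achievability/converse bounds give $\Cl(\{p,q\})\leq C(\{p,q,s_0\})\leq\Cu(\{p,q\})$ for every initial state $s_0\in\sS$; under the hypothesis $\Cl(\{p,q\})=\Cu(\{p,q\})$ this pins $C(\{p,q,s_0\})$ to the common value independently of $s_0$. It therefore suffices to show that this common value is a computable real number.

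By the remark following \eqref{eq:clncun}, for each fixed $n\in\N$ the maps $\Cl_n$ and $\Cu_n$ from $\sPc\times\sQc\times\sS$ into $\R_c$ are computable, and by inspection of their defining formulas (a max over the finite set $\sX^n$ and a min/max over the finite set $\sS$ of mutual informations obtained from the inductive law \eqref{eq:probabilitylaw3}) the construction is uniform in $n$. Hence $\{\Cl_n(\{p,q\})\}_{n\in\N}$ and $\{\Cu_n(\{p,q\})\}_{n\in\N}$ are computable sequences of computable real numbers. Moreover, the classical Fekete-type argument for FSCs (superadditivity of $n\Cl_n$ and subadditivity of $n\Cu_n$) gives
\begin{equation*}
    \Cl_n(\{p,q\})\leq\Cl(\{p,q\})\qquad\text{and}\qquad\Cu_n(\{p,q\})\geq\Cu(\{p,q\}),\quad n\in\N,
\end{equation*}
with convergence to these limits by \eqref{eq:clcu}.

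To supply the monotonicity required by Theorem \ref{the:theorem1}, I would pass to the running envelopes
\begin{equation*}
    a_n\coloneqq\max_{1\leq k\leq n}\Cl_k(\{p,q\}),\qquad b_n\coloneqq\min_{1\leq k\leq n}\Cu_k(\{p,q\}).
\end{equation*}
Finite max/min operations on computable sequences of computable reals are themselves computable sequences of computable reals, so $\{a_n\}$ and $\{b_n\}$ qualify as such. By construction $a_n\leq a_{n+1}$ and $b_n\geq b_{n+1}$, and because $\Cl_k\leq\Cl$ with $\Cl_n\to\Cl$ the running maximum satisfies $a_n\to\Cl(\{p,q\})$, and analogously $b_n\to\Cu(\{p,q\})$. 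Under the hypothesis $\Cl=\Cu$ the two monotone sequences share a common limit, and Theorem \ref{the:theorem1} immediately yields that this limit lies in $\R_c$. Combined with the first paragraph, this gives $C(\{p,q,s_0\})\in\R_c$ for every $s_0\in\sS$.

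The main obstacle is the bookkeeping needed to certify that $\{\Cl_n(\{p,q\})\}_n$ and $\{\Cu_n(\{p,q\})\}_n$ are genuine computable \emph{sequences} (uniformly in $n$) of computable reals, rather than merely a sequence of individually computable numbers; without this uniformity the envelopes $a_n,b_n$ would not themselves be computable sequences and Theorem \ref{the:theorem1} would not apply. A secondary point that must be invoked rather than re-derived is the sandwich $\Cl\leq C(\{p,q,s_0\})\leq\Cu$ in the general (not necessarily indecomposable) FSC setting, which is the classical Gallager-type achievability and blockwise-converse result.
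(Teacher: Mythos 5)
Your overall strategy (monotone computable envelopes plus Theorem \ref{the:theorem1}) is the right one and is essentially what the paper does, but there is a genuine error in the key inequalities you feed into it. The claim that superadditivity of $n\Cl_n$ and subadditivity of $n\Cu_n$ give $\Cl_n(\{p,q\})\leq\Cl(\{p,q\})$ and $\Cu_n(\{p,q\})\geq\Cu(\{p,q\})$ for all $n$ is false. For FSCs these sequences are only superadditive/subadditive up to an additive defect of $\log|\sS|$, i.e.
\begin{equation*}
	(n+m)\,\Cl_{n+m}\geq n\Cl_n+m\Cl_m-\log|\sS|,\qquad (n+m)\,\Cu_{n+m}\leq n\Cu_n+m\Cu_m+\log|\sS|,
\end{equation*}
so Gallager's Theorem 4.6.1 only yields $\Cl_n-\frac{\log|\sS|}{n}\leq\Cl$ and $\Cu_n+\frac{\log|\sS|}{n}\geq\Cu$. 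One can easily build FSCs violating your uncorrected inequalities: if the channel is useless on the first symbol from every initial state and noiseless thereafter, then $\Cu_1=0<\Cu$; reversing the roles gives $\Cl_1>\Cl$. Consequently your envelopes $a_n=\max_{1\leq k\leq n}\Cl_k$ and $b_n=\min_{1\leq k\leq n}\Cu_k$ converge to $\sup_k\Cl_k$ and $\inf_k\Cu_k$, which can differ from $\Cl$ and $\Cu$ (and from each other), so the two monotone sequences need not share a common limit and Theorem \ref{the:theorem1} cannot be invoked.

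The repair is exactly the correction the paper builds in: define the envelopes from the shifted quantities, $a_n=\max_{1\leq k\leq n}\bigl(\Cl_k(\{p,q\})-\frac{\log|\sS|}{k}\bigr)$ and $b_n=\min_{1\leq k\leq n}\bigl(\Cu_k(\{p,q\})+\frac{\log|\sS|}{k}\bigr)$. By Gallager's Theorem 4.6.1 one has $\Cl=\sup_n\bigl(\Cl_n-\frac{\log|\sS|}{n}\bigr)$ and $\Cu=\inf_n\bigl(\Cu_n+\frac{\log|\sS|}{n}\bigr)$, so these corrected envelopes are monotone computable sequences of computable reals converging to $\Cl$ and $\Cu$, which coincide by hypothesis; Theorem \ref{the:theorem1} then finishes the argument as you intended. (The paper itself uses the same corrected quantities, indexed over $1\leq n\leq 2^M$, and then establishes effective convergence by an explicit dyadic-digit argument rather than by citing Theorem \ref{the:theorem1}; your shortcut through Theorem \ref{the:theorem1} is legitimate once the inequalities are fixed. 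Your remaining points --- the sandwich $\Cl\leq C(\{p,q,s_0\})\leq\Cu$ and the uniformity in $n$ of the computability of $\Cl_n,\Cu_n$ --- are handled the same way in the paper.)
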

\begin{proof}
	We make use of the following properties. Let $p\in\sP$ and $q\in\sQ$ be arbitrary. Then 
	\begin{align*}
		\Cu(\{p,q\}) &= \inf_{n\in\N}\Big(\Cu_n(\{p,q\})+\frac{\log|\sS|}{n}\Big), \\
		\Cl(\{p,q\}) &= \sup_{n\in\N}\Big(\Cl_n(\{p,q\})-\frac{\log|\sS|}{n}\Big),
	\end{align*}
	see \cite[Theorem 4.6.1]{Gallager-1968-InformationTheory}.
	
	For every $n\in\N$ and $\{p,q\}\in\sPc\times\sQc$, $\Cu_n(\{p,q\})$ is a computable number. Accordingly, $\{\Cu_n(\{p,q\})\}_{n=1}^\infty$ is a computable sequence of computable reals. We define
	\begin{equation}
		\label{eq:cum}
		\Cu(M;\{p,q\}) \coloneqq \min_{1\leq n\leq 2^M}\Big(\Cu_n(\{p,q\})+\frac{\log|\sS|}{n}\Big).
	\end{equation}
	We see that $\Cu(M;\{p,q\})$ is a computable real for $M\in\N$ and the corresponding sequence $\{\Cu(M;\{p,q\})\}_{n=1}^\infty$ is a computable sequence of computable reals. We have $\Cu(M;\{p,q\}) \geq \Cu(M+1;\{p,q\})$ for $M\in\N$, i.e., the sequence is monotonically decreasing and it holds $\lim_{M\rightarrow\infty} \Cu(M;\{p,q\}) = \Cu(\{p,q\})$. We further set
	\begin{equation}
		\label{eq:clm}
		\Cl(M;\{p,q\}) \coloneqq \max_{1\leq n\leq 2^M}\Big(\Cl_n(\{p,q\})-\frac{\log|\sS|}{n}\Big)
	\end{equation}
	and similarly obtain $\Cl(M+1;\{p,q\}) \geq \Cl(M;\{p,q\})$ for $M\in\N$. It holds $\lim_{M\rightarrow\infty} \Cl(M;\{p,q\}) = \Cl(\{p,q\})$. By assumption we further have for all $s_0\in\sS$, $\Cl(\{p,q\}) = C(\{p,q,s_0\}) = \Cu(\{p,q\})$.
	
	Next, we consider the function
	\begin{equation*}
		g_M(\{p,q\}) \coloneqq \Cu(M;\{p,q\}) - \Cl(M;\{p,q\}). 
	\end{equation*}
	Due to the monotonicity of both sequences, we have $0 \leq g_{M+1}(\{p,q\}) \leq g_M(\{p,q\})$ and $\lim_{M\rightarrow\infty}g_M(\{p,q\}) =0$.
	
	Let $n\in\N$ be arbitrary. Now we can compute the $n+2$-nd bit of the dyadic representation of $g_M(\{p,q\})$. Due to the channels, we obviously have $g_M(\{p,q\})\leq1$. Let $M_0=M_0(n)$ the smallest natural number such that the first $n+2$ bits of the dyadic representation of $g_{M_0}(\{p,q\})$ are zero. Then it holds for all $M\geq M_0$ that the $n+2$-nd bit of the dyadic representation of $g_M(\{p,q\})$ is zero as well due to the monotonic convergence. But this implies that $g_{M_0}(\{p,q\}) = \sum_{k=n+3}^\infty a_k\frac{1}{2^k}$, $a_k\in\{0,1\}$ and therewith $g_{M_0}(\{p,q\}) \leq \sum_{k=n+3}^\infty \frac{1}{2^k} = \frac{1}{2^{n+3}}\sum_{k=0}^\infty \frac{1}{2^k} = \frac{1}{2^{n+2}}$. With this we obtain
	\begin{equation*}
		0 \leq \Cu(M_0;\{p,q\}) - \Cl(M_0;\{p,q\}) < \frac{1}{2^{n+2}}.
	\end{equation*}
	For $M\geq M_0$ we have $\Cu(M;\{p,q\}) \leq \Cu(M_0;\{p,q\})$ and $\Cl(M;\{p,q\}) \geq \Cl(M_0;\{p,q\})$. Thus, for $M\geq M_0$ we obtain
	\begin{align*}
		0&\leq \Cu(M;\{p,q\}) - \Cl(M;\{p,q\}) 	\\
			&\leq \Cu(M_0;\{p,q\}) - \Cl(M;\{p,q\})	\\
			&\leq \Cu(M_0;\{p,q\}) - \Cl(M_0;\{p,q\}) \\
			&<\frac{1}{2^{n+2}}.
	\end{align*}
	Due to $C(\{p,q,s_0\})=\Cl(\{p,q\})=\lim_{M\rightarrow\infty}\Cl(M;\{p,q\})$ for all $s_0\in\sS$, we further have
	\begin{equation*}
		0 \leq \Cu(M;\{p,q\}) - C(\{p,q,s_0\}) < \frac{1}{2^{n+2}}.
	\end{equation*}
	The function $M_0=M_0(n)$ is effectively computable, since it is sufficient to run our algorithm until $a_{n+2}(g_{M_0}(\{p,q\}))=0$ is satisfied which completes the proof.
\end{proof}

\begin{remark}
	For every computable FSC $\{p,q,s_0\}$, $s_0\in\sS$, that satisfies $\Cu(\{p,q\})=\Cl(\{p,q\})$, we have $C(\{p,q,s_0\})\in\R_c$ for all $s_0\in\sS$, i.e., the capacity is a computable real number. This means that there exists an algorithm for the inputs $p,q$ that computes the desired approximation of the number $C(\{p,q,s_0\})$. In general, this algorithm does not depend recursively on the input $\{p,q\}$. Theorem \ref{the:comp} actually shows that this dependency must be non-recursive in general, since $C(\{p,q,s_0\})$ is not even Banach-Mazur computable in $\{p,q,s_0\}$.
\end{remark}

\begin{remark}
	If there exist $\{\hat{p},\hat{q}\}\in\sP_c\times\sQ_c$ and $s_0\in\sS$ such that $C(\{\hat{p},\hat{q},s_0\})\notin\R_c$, then this is the strongest form of non-computability, since then the value $C(\{\hat{p},\hat{q},s_0\})$ is not algorithmically computable although the parameters $\{\hat{p},\hat{q}\}\in\sP_c\times\sQ_c$ are computable real numbers. In \cite{BocheSchaeferPoor-2020-ICASSP-CompoundAlgorithmicPerspective,BocheSchaeferPoor-2020-TSP-CommunicationChannelUncertainty} it has been shown that there exist computable compound and averaged channels, where the state set is countably infinite, for which this phenomenon appears, i.e., there are computable compound and averaged channels such that its capacity is not a computable real number. This implies that for certain fixed computable compound or averaged channels, there exists no algorithm for the computation of the capacity. In \cite{BocheSchaeferPoor-2020-TSP-CommunicationChannelUncertainty} it has been further shown that such channels can be constructed based on binary symmetric channels. In addition to that, it has been shown that the achievability part, i.e., the coding part, cannot be constructive, i.e., there is no algorithm that can construct the corresponding encoder and decoder. This is particularly interesting to observe given the recent progress in polar codes that can construct algorithmically capacity-achieving encoder and decoder for fixed computable binary symmetric channels. The result in \cite{BocheSchaeferPoor-2020-TSP-CommunicationChannelUncertainty} on the other hand shows that this is no longer possible in general for compound and averaged channels. 

Some further comments are in order:
\begin{itemize}
	\item There are several definitions of computable functions which are not equivalent in general.
	\item The notion of Banach-Mazur computability is the weakest notion of computability.
	\item From a practical point of view, it is not clear if it makes sense to further weaken the requirements of Banach-Mazur computable functions.
	\item It is common sense that a computable function should map computable numbers from its domain to computable numbers within its value range. 
\end{itemize}
	
	To this end, it is interesting to observe that computable compound and averaged channels need not necessarily satisfy this basic requirement, cf. \cite{BocheSchaeferPoor-2020-TSP-CommunicationChannelUncertainty}, where computable channels are constructed whose capacity is a non-computable real number.
\end{remark}

\section{FSC Capacity as an Optimization Problem}
\label{sec:singleletter}

Let us go back one more time to Theorem \ref{the:comp} and its proof, where we analyzed the capacity function $C(\{p,q,s_0\}):\sP_c\times\sQ_c\times\sS\rightarrow\R$. We have shown that the capacity function $C(\{p,q,s_0\})$ is discontinuous for certain $s_0\in\sS$ and computable $p\in\sP_c$ and $q\in\sQ_c$.

\begin{theorem}
	\label{cor:discont}
	For all $|\sX|\geq2$, $|\sY|\geq2$, and $|\sS|\geq2$, the capacity function $C:\sP\times\sQ\times\sS\rightarrow\R$ is discontinuous.
\end{theorem}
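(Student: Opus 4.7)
The plan is to reuse the channel family $\{p, q_\lambda, s_0\}$ constructed in the proof of Theorem~\ref{the:comp} and to observe that continuity of $C$ at $\lambda = 0$ would be incompatible with the indecomposability that $q_\lambda$ enjoys for all $\lambda > 0$. I would handle the binary case $|\sX|=|\sY|=|\sS|=2$ first, and then extend to arbitrary alphabets by the same padding-with-zeros trick used at the end of the proof of Theorem~\ref{the:comp}.

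Concretely, recall from that proof that with $p$ as in \eqref{eq:p} and $\hat{q} = q_0$ as in \eqref{eq:hatq} we have
\begin{equation*}
    C(\{p,\hat{q},0\}) = 1 \quad\text{and}\quad C(\{p,\hat{q},1\}) = 1 - H_2(\epsilon),
\end{equation*}
which are distinct. On the other hand, for every $\lambda \in (0, \tfrac{1}{2}]$ the state process under $q_\lambda$ is an irreducible, aperiodic Markov chain on $\sS = \{0,1\}$ whose transitions are independent of the input, so the FSC $\{p,q_\lambda,s_0\}$ is indecomposable in the sense of Definition~\ref{def:indecomp}. By the discussion in Section~\ref{sec:fsc_indecomp} this forces
\begin{equation*}
    C(\{p,q_\lambda,0\}) = C(\{p,q_\lambda,1\}), \qquad \lambda \in (0, \tfrac{1}{2}].
\end{equation*}

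I would then suppose for contradiction that $C : \sP \times \sQ \times \sS \to \R$ is continuous. Since the entries of $q_\lambda$ converge to those of $\hat{q}$ as $\lambda \to 0^+$, continuity at $(p,\hat{q},0)$ and $(p,\hat{q},1)$ would imply $\lim_{\lambda \to 0^+} C(\{p,q_\lambda,s_0\}) = C(\{p,\hat{q},s_0\})$ for $s_0 \in \{0,1\}$. Passing to the limit in the indecomposable-case equality above would then force $C(\{p,\hat{q},0\}) = C(\{p,\hat{q},1\})$, contradicting $1 \neq 1 - H_2(\epsilon)$. For general alphabets with $|\sX|,|\sY|,|\sS|\geq 2$, the zero-padding extension of Theorem~\ref{the:comp} preserves both the indecomposability of $q_\lambda$ for $\lambda>0$ and the capacity values at $\lambda = 0$, so the argument goes through unchanged. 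There is essentially no obstacle here: the only point to verify carefully is indecomposability of $\{p,q_\lambda,s_0\}$ for $\lambda > 0$, which is immediate because the state chain is autonomous and mixes exponentially fast to the uniform stationary distribution, making the effect of $s_0$ decay uniformly in the input sequence.
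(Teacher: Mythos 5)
Your proposal is correct and follows essentially the same route as the paper's own proof: the paper also perturbs $\hat{q}$ to the indecomposable family $q_k$ with parameter $\frac{1}{k+1}$ (your $\lambda$), uses $C(\{p,q_k,0\})=C(\{p,q_k,1\})$ for the perturbed channels, and passes to the limit to contradict $1\neq 1-H_2(\epsilon)$. The only cosmetic difference is that you use a continuous parameter $\lambda\to 0^+$ where the paper uses the sequence $k\to\infty$.
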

\begin{proof}
	We consider the channels $p(y_n|x_n,0)$, $p(y_n|x_n,1)$, $\hat{q}(y_n|x_n,0)$, and $\hat{q}(y_n|x_n,1)$ as in \eqref{eq:p} and \eqref{eq:hatq}.
	
	Next, we consider $\{p,q_k,s_0\}$ for $k\geq1$ with 
	\begin{equation}
		q_k(s_n|x_n,0) \!=\! \begin{pmatrix}
			1\!-\!\frac{1}{k+1}  \\
			\frac{1}{k+1}
		\end{pmatrix},
		q_k(s_n|x_n,1) \!=\! \begin{pmatrix}
			\frac{1}{k+1}  \\
			1 \!-\! \frac{1}{k+1}
		\end{pmatrix}\!.
		\label{eq:qn}
	\end{equation}
	We observe that the FSC $\{p,q_k,s_0\}$, $s_0\in\sS$, $k\geq1$, as defined above is indecomposable. Further, $q_k$ is obviously computable, i.e., $q_k\in\sQc$, and further independent of $x_n\in\sX$.
	
	Next, we need a concept of distance. For $\pone,\ptwo\in\sPc$ and $\qone,\qtwo\in\sQc$ we define the distance between the FSCs $\{\pone,\qone,s_0\}$ and $\{\ptwo,\qtwo,s_0\}$ as
	\begin{equation}
		\begin{split}
		d(\{\pone,\qone,s_0\},\{\ptwo,\qtwo,s_0\}) \quad\qquad\qquad\qquad\\
			= \max_{x\in\sX}\sum_{y\in\sY}\big|\pone(y|x,s_0)-\ptwo(y|x,s_0)\big| \qquad \\
			+ \max_{x\in\sX}\sum_{s\in\sS}\big|\qone(s|x,s_0)-\qtwo(s|x,s_0)\big|.
		\end{split}
		\label{eq:dist}
	\end{equation}
	For FSCs as defined in \eqref{eq:p}-\eqref{eq:qn}, we have for any $s_0\in\sS$, $d(\{p,\hat{q},s_0\},\{p,q_k,s_0\}) = \frac{2}{k+1}$.
	
	Next, let us assume that $C(\{p,q,s_0\})$, $s_0\in\{0,1\}$, is Banach-Mazur computable on $\sP_c\times\sQc$. Then this would require that both capacities for $s_0=0$ and $s_0=1$ are continuous functions on $\sPc\times\sQc$. In particular, we must have $\lim_{k\rightarrow\infty}C(\{p,q_k,0\})=C(\{p,q,0\})$ and $\lim_{k\rightarrow\infty}C(\{p,q_k,1\})=C(\{p,q,1\})$.
	
	Since for all $k\in\N$ the FSC $\{p,q_k,s_0\}$, $s_0\in\sS$, is indecomposable, we have $C(\{p,q_k,0\})=C(\{p,q_k,1\})$ and further obtain
	\begin{align*}
		1 &= C(\{p,q,0\}) = \lim_{k\rightarrow\infty}C(\{p,q_k,0\}) = \lim_{k\rightarrow\infty}C(\{p,q_k,1\}) \\
			&= C(\{p,q,1\})=C_{\text{BSC}}(\epsilon)=1-H_2(\epsilon)<1
	\end{align*}
	which is a contradiction. Accordingly, at least one of the functions $C(\{p,q,0\})$ or $C(\{p,q,1\})$ must be discontinuous proving the desired result.
\end{proof}

This allows to obtain the following result.

\begin{theorem}
	\label{the:continuous}
	Let $|\sX|\geq2$, $|\sY|\geq2$, and $|\sS|\geq2$ be arbitrary. Then there is no natural number $n_0\in\N$ such that the capacity $C(\{p,q,s_0\})$ can be expressed as
	\begin{equation}
		\label{eq:singleletter_cont}
		C(\{p,q,s_0\}) = \max_{u\in\sU}F(u,p,q,s_0)
	\end{equation}
	with $\sU\subset\R^{n_0}$ a compact set and $F: \sU\times\sP\times\sQ\times\sS\rightarrow\R$ a continuous function.
\end{theorem}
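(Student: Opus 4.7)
The plan is to derive Theorem \ref{the:continuous} as a direct consequence of the discontinuity result in Theorem \ref{cor:discont}, via the classical maximum theorem: the supremum of a jointly continuous function over a compact parameter set depends continuously on the remaining variables. If such a representation \eqref{eq:singleletter_cont} existed, then $C(\{p,q,s_0\})$ would inherit continuity in $(p,q)$ for every fixed $s_0$, contradicting the fact established in Theorem~\ref{cor:discont} that for some $s_0\in\sS$, the function $C(\{\cdot,\cdot,s_0\}):\sP\times\sQ\to\R$ is discontinuous.

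Concretely, I would argue by contradiction: fix $s_0\in\sS$ and suppose there is a compact set $\sU\subset\R^{n_0}$ and a continuous $F:\sU\times\sP\times\sQ\times\sS\to\R$ with $C(\{p,q,s_0\})=\max_{u\in\sU}F(u,p,q,s_0)$. Take an arbitrary sequence $(p_k,q_k)\to(p,q)$ in $\sP\times\sQ$ with respect to the distance in \eqref{eq:dist}. For the upper bound, pick $u_k\in\sU$ attaining the maximum at $(p_k,q_k,s_0)$; by compactness of $\sU$, some subsequence satisfies $u_{k_j}\to u^*\in\sU$, and joint continuity of $F$ gives
\begin{equation*}
\limsup_{k\to\infty}C(\{p_k,q_k,s_0\})=\lim_{j\to\infty}F(u_{k_j},p_{k_j},q_{k_j},s_0)=F(u^*,p,q,s_0)\leq C(\{p,q,s_0\}).
\end{equation*}
For the lower bound, pick $u^{**}\in\sU$ attaining the maximum at $(p,q,s_0)$; then $C(\{p_k,q_k,s_0\})\geq F(u^{**},p_k,q_k,s_0)\to F(u^{**},p,q,s_0)=C(\{p,q,s_0\})$. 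Hence $C(\{\cdot,\cdot,s_0\})$ is continuous on $\sP\times\sQ$.

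Since this continuity would hold for every $s_0\in\sS$ (the argument is symmetric in $s_0$, which lives in the finite, hence discrete, set $\sS$), it directly contradicts Theorem~\ref{cor:discont}, which guarantees the existence of at least one $s_0\in\sS$ for which $C(\{\cdot,\cdot,s_0\})$ is discontinuous. This contradiction forces the nonexistence of $n_0$, $\sU$, and $F$ with the claimed properties. The only delicate point is making sure the topology used on $\sP\times\sQ$ agrees with the one underlying the discontinuity proof of Theorem~\ref{cor:discont}; since \eqref{eq:dist} induces the standard total-variation (and hence Euclidean) topology on the finite-dimensional simplices $\sP$ and $\sQ$, this alignment is immediate and not a real obstacle.
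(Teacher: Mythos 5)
Your proposal is correct and follows exactly the route of the paper's (sketched) proof: the paper likewise argues that a representation of the form \eqref{eq:singleletter_cont} would force $C(\{\cdot,\cdot,s_0\})$ to be continuous, contradicting Theorem~\ref{cor:discont}. You have merely filled in the standard maximum-theorem details that the paper delegates to a citation, and your subsequence argument for the upper and lower semicontinuity of the max is sound.
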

\begin{proof}[Sketch of Proof]
	The result can be shown similarly as in \cite{BocheSchaeferPoor-2019-ISIT-IdentificationCorrelationAssisted}. The crucial observation is the following: To be able to express the capacity $C(\{p,q,s_0\})$ as in \eqref{eq:singleletter_cont}, the capacity necessarily needs to be a continuous function which cannot be the case by Corollary \ref{cor:discont}.
\end{proof}

\begin{remark}
	Theorem \ref{the:continuous} further immediately implies that the capacity $C$ cannot be expressed by a finite multi-letter formula. As a consequence, if $C$ can be described by entropic quantities, then this must be done via a corresponding sequence. Accordingly, the characterization via a limit of multi-letter expressions cannot be simplified and there is no closed form solution possible in general for the capacity of FSCs.
\end{remark}

\section{Discussion and Open Problems}
\label{sec:discussion}

In this paper, we have studied the capacity of FSCs and we have shown that the capacity function $C(\{p,q,s_0\})$ is not Banach-Mazur computable. As a consequence, the capacity does not depend recursively on the system parameters $\{p,q,s_0\}$ and it is impossible to algorithmically compute the capacity $C(\{p,q,s_0\})$. We have further shown that we cannot find tight lower and upper bounds on the capacity which are simultaneously computable continuous functions or Borel computable functions, respectively. This means that either the achievability or the converse (or both) must result in non-computable lower or upper bounds. It is not known which of them are actually non-computable for the FSC and, accordingly, the implications on the information theoretic approaches of the achievability and converse are unknown. Furthermore, the capacity is also shown to be non-approximable, i.e., it is impossible to approximate the capacity by computable sequences of computable functions for certain approximation errors.

For certain applications however, algorithmically computing the capacity of an FSC might be more than is actually needed. For example for resource allocation, it is often sufficient to know whether or not the current channel supports a certain quality-of-service (QoS) requirement $\lambda$. Accordingly, the following question is of interest: 
\vspace*{0.25\baselineskip}

\begin{tcolorbox}[colback=white,boxrule=0.125ex]
	{\bf Question 5:} Is there an algorithm (or Turing machine) that takes the FSC $\{p,q,s_0\}$ and the QoS requirement $\lambda>0$ as inputs and outputs ``\emph{yes}'' if $C(\{p,q,s_0\})>\lambda$ and outputs ``\emph{no}'' if $C(\{p,q,s_0\})<\lambda$?
\end{tcolorbox}

This is a \emph{decision} problem where the Turing machine decides whether or not an FSC supports a certain QoS requirement. Note that this Turing machine necessarily needs to stop for all possible inputs. However, it is not clear if problem is decidable and that such a Turing machine actually exists. In such a case, one may be inclined to weaken the question as follows:
\vspace*{0.25\baselineskip}

\begin{tcolorbox}[colback=white,boxrule=0.125ex]
	{\bf Question 6:} Is there an algorithm (or Turing machine) that takes the FSC $\{p,q,s_0\}$ and the QoS requirement $\lambda>0$ as inputs and stops if $C(\{p,q,s_0\})>\lambda$?
\end{tcolorbox}

This modified question asks whether or not it is \emph{semidecidable}. Here, the Turing machine must only stop and output the correct answer if the FSC supports the QoS requirement, i.e., $C(\{p,q,s_0\})>\lambda$. In the other case, it does not stop and runs forever. It is clear that one can pose this question also in the opposite way by requiring the Turing machine to stop only if $C(\{p,q,s_0\})<\lambda$.

There are several communication scenarios and channels whose capacity functions are not Turing computable, but their corresponding decision problems are semidecidable, cf. \cite{BocheSchaeferPoor-2019-WIFS-ResourceAllocation}. It is of interest to study such questions also for FSCs.

\bigskip

We want to conclude by coming back one more time to Kailath's work in information theory and the characterization of time-variant channels. In this case, the corresponding characterizations of capacities according to our results remain unknown. But as already mentioned in the introduction, there are further connections to the effective analysis and constructive mathematics. Here, the aim is to solve certain mathematical questions effectively, i.e., with the help of algorithms. 

Recently, impressive progress has been made in the theory of time-variant channels. For a detailed discussion we refer to \cite{WalnutPfanderKailath-2015-Birkhauser-CornerstonesSamplingOperatorTheory}. For example, progress in the design of test signals for channel identification \cite{LawrencePfanderWalnut-2005-JFAA-LinearIndependenceGaborSystems,PfanderWalnut-2006-TIT-MeasurementTimeVariantLinearChannels}, extension to the multiple-input multiple-output (MIMO) case \cite{PfanderWalnut-2016-TIT-SamplingReconstructionOperators,LeePfanderPohl-2019-TSP-SamplingReconstructionMIMOChannels}, stochastic channels \cite{PfanderZheltov-2014-ACHA-IdentificationStochasticOperators,PfanderZheltov-2014-TIT-SamplingStochasticOperators}, channels with unknown carrier \cite{HeckelBolcskei-2013-TIT-IdentificationSparseLinearOperators,PfanderWalnut-2016-TIT-SamplingReconstructionOperators}, constraints on the channel estimation \cite{LeePfanderPohlZhou-2019-LAA-IdentificationChannels}, and others. These results address many of the problems discussed in \cite{Kailath-1959-TechRep-SamplingLinearTimeVariantFilter} and provide solutions based on the classical analysis. In these works, methods such as distribution theory have been used that are not effective in general. This means that only the existence of certain strategies has been shown without the provision of effective algorithms or proofs. Note that this does not immediately exclude the possibility of a constructive characterization. But we want to note that in \cite{BocheMonich-2020-ICASSP-EffectiveApproximationBandlimitedSignals} computable absolutely integrable band-limited signals have been constructed, which are then also computable signals in $L^2(\R)$, for which the bandwidth $B(f)$ is not a computable real number. It is not clear if this yields the impossibility of effective characterizations of the results in the above mentioned works.

\section*{Acknowledgment}

Holger Boche would like to thank Volker Pohl for insightful discussions on time-continuous channels. He would like to further thank Robert Schober for interesting and fruitful discussions on the application of FSCs and time-continuous channels in molecular communication.

This work of H. Boche was supported in part by the German Federal Ministry of Education and Research (BMBF) within the national initiative for ``\emph{Molecular Communication (MAMOKO)}'' under Grant 16KIS0914 and in part by the German Research Foundation (DFG) within the Gottfried Wilhelm Leibniz Prize under Grant BO 1734/20-1 and within Germany's Excellence Strategy -- EXC-2111 -- 390814868. This work of R. F. Schaefer was supported in part by the BMBF within the national initiative for ``\emph{Post Shannon Communication (NewCom)}'' under Grant 16KIS1004 and in part by the DFG under Grant SCHA 1944/6-1. This work of H. V. Poor was supported by the U.S. National Science Foundation under Grants CCF-0939370, CCF-1513915, and CCF-1908308.

This paper was presented in part at the IEEE Information Theory Workshop (ITW), Visby, Sweden, Aug. 2019 \cite{BocheSchaeferPoor-2019-ITW-ComputabilityFSC} and in part at the National Research Meeting on Molecular Communications at the Friedrich-Alexander-Universit\"at Erlangen-N\"urnberg, Germany, Dec. 2018.



\address{Institute of Theoretical Information Technology\\
 Technische Universit\"at M\"unchen\\
Munich, Germany\\
\email{boche@tum.de}}

\address{Information Theory and Applications Chair\\
Technische Universit\"at Berlin\\	
Berlin, Germany\\
\email{rafael.schaefer@tu-berlin.de}}

\address{Department of Electrical Engineering\\
Princeton University\\
Princeton, NJ 08544, USA\\
\email{poor@princeton.edu}}

\end{document}